\theoremstyle{plain}
\newtheorem{thm}{Theorem}
\newtheorem{prop}{Proposition}
\newtheorem{lemma}{Lemma}
\newtheorem{coro}{Corollary}
\newtheorem{example}{Example}
\theoremstyle{definition}
\newtheorem{remark}{Remark}
\newcommand{\id}{\operatorname{id}}
\newcommand{\pe}{\tilde{+}}
\newcommand{\me}{\tilde{-}}
\newcommand{\tht}{\frac{\theta}{2}}
\definecolor{amethyst}{rgb}{0.75, 0, 1}
\definecolor{cinnamon}{rgb}{0.7, 0.3, 0.2}
\newcommand{\mc}{\mathcal}
\renewcommand{\H}{\mathcal{H}}
\newcommand{\B}{\mathcal{B}}
\renewcommand{\S}{\mc{S}}
\newcommand{\fmig}{\frac{1}{2}}
\newcommand{\I}{\mathbb{I}}
\newcommand{\ii}{\mathrm{i}}
\newcommand{\swich}[1]{\left(#1\right)}
\newcommand{\cwich}[1]{\left[#1\right]}
\newcommand{\set}[1]{\left\{#1\right\}}
\newcommand{\kket}[1]{\ket{\mkern-2mu\ket{#1}\mkern-4mu}}
\newcommand{\kketbbra}[1]{\ket{\mkern-2mu\ket{#1}\mkern-4mu}\mkern-5mu\bra{\mkern-4mu\bra{#1}\mkern-2mu}}
\newcommand{\supp}{\operatorname{supp}}
\let\inserttitle\@title
\title{\textbf{Single-letter Chain Rule for Quantum Relative Entropy}}
\date{\vspace{-10ex}}
\author[1,2]{Giulio Gasbarri\thanks{\href{mailto:giulio.gasbarri@uni-siegen.de}{giulio.gasbarri@uni-siegen.de}}$^,$\textsuperscript{$\ddagger$}$^,$}
\author[2,3]{Matt Hoogsteder-Riera\thanks{\href{mailto:matt.hoogsteder@uab.cat}{matt.hoogsteder@uab.cat}}$^,$\textsuperscript{$\ddagger$}$^,$}
\affil[1]{Naturwissenschaftlich-Technische Fakult\"{a}t, \protect\\Universit\"{a}t Siegen, 57068 Siegen, Germany}
\affil[2]{{\textit{Grup d'Informaci\'o Qu\`antica, Departament de F\'isica,\protect\\[-1mm] Universitat Aut\`onoma de Barcelona, 08193 Bellaterra (BCN), Spain}}}
\affil[3]{Department Mathematik/Informatik--Abteilung Informatik,\protect\\[-1mm] Universit\"{a}t zu K\"{o}ln, Albertus-Magnus-Platz, 50923 K\"{o}ln, Germany}
\begin{document}

\maketitle
\def\thefootnote{$\ddagger$}\footnotetext{The authors contributed equally to this work.}
\def\thefootnote{\arabic{footnote}}

    \vspace{2em} 
\begin{abstract}
    
Relative entropy is the standard measure of distinguishability in classical and quantum information theory. In the classical case, its loss under channels admits an exact chain rule, while in the quantum case only asymptotic, regularized chain rules are known. We establish new chain rules for quantum relative entropy that apply already in the single-copy regime. The first inequality can be naturally obtained via POVM decompositions, extending the point distributions in the classical chain rule to quantum ensemble partitions. The second gives a sufficient condition for the most natural extension of the classical result, which uses projectors as an analog for the classical point distributions.
We additionally find a semiclassical chain rule where the point distributions are replaced with the projectors of the initial states, and, finally, we find a relation to previous works on strengthened data processing inequalities and recoverability.
These results show that meaningful chain inequalities are possible already at the single-copy level, but they also highlight that tighter bounds remain to be found.
\end{abstract}

\tableofcontents

\section{Introduction}

Entropy inequalities formalize the principle that statistical distinguishability cannot increase under noisy processing. In the classical setting, distinguishability is quantified by the Kullback-Leibler (KL) divergence between probability distributions $p,q$ on a finite alphabet $\mc{X}$,
\begin{align}
D(p\|q) = \sum_x p(x)\,\log \frac{p(x)}{q(x)}.
\end{align}
The KL divergence is non-negative, vanishes if and only if $p=q$, and obeys the \emph{data-processing inequality} (DPI), i.e.
\begin{align}
D(Mp\|Mq) \leq D(p\|q),
\end{align}
for every stochastic map $M$. 
Beyond monotonicity, the classical case admits a sharper refinement in the form of a chain rule. For any pair of stochastic maps $M$ and $N$,
\begin{align}\label{eq:cl_Chain_rule}
D(p \|q ) - D(Mp\|Nq) \;\geq\; - \mathbb{E}_p\, D(M\delta_j \| N\delta_j),
\end{align}
where $\delta_j$ denotes the point distribution at $j$.
This inequality decomposes the global change of relative entropy 
into an average of local divergences of the point distributions. 
It provides a finer description of how distinguishability flows through channels and underpins structural properties such as joint convexity of relative entropy.

In the quantum setting, probability distributions are replaced by density operators. For states $\rho,\;\sigma$ on a finite-dimensional Hilbert space, the Umegaki relative entropy~\cite{hiai91,ogawa01} is defined as
\begin{align}
D(\rho\|\sigma) =
\begin{cases}
\Tr\!\left[\rho (\log \rho - \log \sigma)\right], & \text{if }\supp(\rho)\subseteq\supp(\sigma), \\[4pt]
+\infty, & \text{otherwise}.
\end{cases}
\end{align}
Its monotonicity under quantum channels,
\begin{align}
D(\rho\|\sigma) \geq D(\mc{M}(\rho)\|\mc{M}(\sigma)),
\end{align}
is the \emph{quantum DPI}, established independently by Lindblad~\cite{lindblad75} and Uhlmann~\cite{uhlmann77}, and known to be equivalent to the strong subadditivity of entropy~\cite{lieb73}.

The equality conditions for the DPI were characterized by Petz~\cite{petz86,petz88}, who introduced the recovery map
\begin{align}
\mc{R}_{\sigma,\mc{M}}(X) = \sigma^{1/2}\,\mc{M}^\dagger\!\left(\mc{M}(\sigma)^{-1/2}\,X\,\mc{M}(\sigma)^{-1/2}\right)\,\sigma^{1/2},
\end{align}
and proved that equality holds if and only if $\rho = \mc{R}_{\sigma,\mc{M}}(\mc{M}(\rho))$. This result initiated the systematic study of recoverability.
Fawzi and Renner~\cite{fawzi15} gave the first quantitative refinement, proving that the conditional mutual information of a tripartite state lower bounds the fidelity of recovery:
\begin{align}
I(A:C|B)_\rho \;\geq\; -2\log F\!\left(\rho_{ABC}, \mc{R}_{B\to BC}(\rho_{AB})\right).
\end{align}
This revealed that small conditional mutual information guarantees the existence of a high-fidelity recovery channel. 

An important refinement came with the recognition that the~\emph{measured relative entropy},
\begin{align}
\mathbb{D}_M(\rho\|\sigma)=\sup_M D_M(\rho\|\sigma) =  \sup_M D(P_\rho^{M}\|P_\sigma^{M}),
\end{align} is the correct quantity in strengthened inequalities. Here $P_\rho^{M}$ is the probability distribution generated by measuring POVM $M=\{M_i\}$ on $\rho$. Brand\~{a}o, Harrow, Aram, Oppenheim and Strelchuck~\cite{brandao2015quantum} established
\begin{align}
I(A:C|B)_\rho \;\geq\; \mathbb{D}_M\!\left(\rho_{ABC}\,\big\|\, \mc{R}_{B\to BC}(\rho_{AB})\right).
\end{align}
This result demonstrates that the conditional mutual information upper-bounds the optimal measured relative entropy between the true tripartite state and any state recovered from its marginal on~$AB$, optimized over all possible measurements.

These ideas were subsequently extended to quantum relative entropy.
Wilde~\cite{wilde15} and Sutter, Tomamichel and Harrow~\cite{sutter16strengthened} established strengthened data-processing inequalities that bound the loss of relative entropy under a quantum channel.
Their results make use of rotated Petz recovery maps, defined for each real parameter $t \in \mathbb{R} $ by
\begin{align}
\mc{R}_{\sigma,\mc{M}}^{(t)}(X) = 
\sigma^{\frac{1+it}{2}}\,
\mc{M}^\dagger\!\left(\mc{M}(\sigma)^{-\frac{1+it}{2}}\,X\,\mc{M}(\sigma)^{-\frac{1-it}{2}}\right)\,
\sigma^{\frac{1-it}{2}}.
\end{align} 
 Building on Wilde's work~\cite{wilde15}, Sutter~\emph{et al.}~\cite{sutter16strengthened} employed the rotated recovery map to prove that 
\begin{align}\begin{split}
\label{eq:recoveryBounds}
D(\rho\|\sigma) - D(\mc{M}(\rho)|\mc{M}(\sigma))&\geq \mathbb{D}_M(\rho\|\mc{R}\circ \mc{M}(\rho)) \\
&\geq -\,2 \log F\!\left(\rho,\; \mc{R}\circ\mc{M}(\rho)\right).
\end{split}\end{align} 
In these strenghtened DPI the recovery map $\mc{R}$ is a $\rho$-dependent convex combination of rotated Petz maps. This naturally raises the question of whether a universal (i.e., $\rho$-independent) recovery map exists.

Junge, Renner, Sutter, Wilde, and Winter~\cite{junge18} and Sutter, Berta, and Tomamichel \cite{sutter16} answered this positively, by introducing an explicit convex mixture 
\begin{align}
    \overline{\mc{R}}_{\sigma,\mc{M}}=\int_{-\infty}^{\infty}d\beta_0(t)\;\mc{R}_{\sigma,\mc{M}}^{(t)},\qquad \beta_0(t)=\frac{\pi}{2(1+\cosh(\pi t))}
\end{align}  
of rotated Petz maps,
\begin{align}
\mc{R}_{\sigma,\mc{M}}^{(t)}(X) =
\sigma^{\frac{1+it}{2}}\,
\mc{M}^\dagger\!\left(\mc{M}(\sigma)^{-\frac{1+it}{2}}\,X\,\mc{M}(\sigma)^{-\frac{1-it}{2}}\right)\,
\sigma^{\frac{1-it}{2}}.
\end{align}
Junge \emph{et al.}~\cite{junge18} improved the fidelity bound by showing that the loss of quantum relative entropy under any quantum channel admits a quantitative lower bound in terms of this universal recovery map:
\begin{align}\label{eq:universalBound}
D(\rho\|\sigma) - D(\mc{M}(\rho)\|\mc{M}(\sigma)) \;\geq\; -2\log F\!\!\left(\rho\,\big\|\,\overline{\mc{R}}_{\sigma,\mc{M}}\circ\mc{M}(\rho)\right).
\end{align}
thereby establishing a sharp relation between entropy loss and recoverability valid for arbitrary channels.

Subsequently, Sutter, Berta, and Tomamichel~\cite{sutter16} proved a universal strengthened DPI in terms of the measured relative entropy:
\begin{align}\label{eq:strineq}
D(\rho\|\sigma) - D(\mc{M}(\rho)\|\mc{M}(\sigma))\ge \mathbb{D}_M(\rho\|\overline{\mc{R}}_{\sigma,\mc{M}}\circ \mc{M}(\rho)).
\end{align} 
This result quantitatively relates entropy loss to recoverability for arbitrary channels.
Unlike in the classical setting, where exact refinements hold in terms of relative entropy, explicit counterexamples show that quantum strengthened inequalities in terms of relative entropy $D$ are not possible; only the measured relative entropy $\mathbb{D}_{M}$ yields valid bounds~\cite{fawzi18eff,hirche18Thesis}.
While strengthened recoverability inequalities are now well understood, no quantum analogue of the classical chain rule~\cref{eq:cl_Chain_rule} is known.
In the quantum setting, noncommutativity obstructs a decomposition of relative entropy into local divergences of point distributions. 
 Fang, Fawzi, Renner, and Sutter~\cite{fang20} made this precise observing that the naive inequality with the single-letter channel divergence $D(\mc{M}\|\mc{N})$ does not hold. They furthermore showed that a meaningful chain rule can be recovered in the many-copy setting by introducing the regularized channel relative entropy
\begin{align}
D^{\mathrm{reg}}(\mc{M}\|\mc{N}) = \lim_{n\to\infty} \frac{1}{n} \, D(\mc{M}^{\otimes n}\|\mc{N}^{\otimes n}),
\end{align}
where
\begin{align}
D(\mc{M}\|\mc{N}) := \sup_{\rho_{RA}} D\!\Big((\id_R\otimes \mc{M})(\rho_{RA}) \,\Big\|\, (\id_R\otimes \mc{N})(\rho_{RA})\Big).
\end{align}
With this regularization, they established the chain rule
\begin{align}
D\!\Big(\mc{M}(\rho) \,\Big\|\,  \mc{N}(\sigma)\Big)
\;\leq\; D(\rho\|\sigma) + D^{\mathrm{reg}}(\mc{M}\|\mc{N}).
\end{align}

Thus, unlike in the classical case, a quantum chain rule exists only in an asymptotic, many-copy sense, and it is precisely the regularization that guarantees its validity. 

In the present work, we establish a complementary form of chain inequality that holds already at the single-copy level. 

In~\cref{sec:resultsClassicalMethods} we show a single-letter quantum chain rule, some consequeneces and its relation to proviously studied quantities, in~\cref{sec:resultsConditional} we show some general entropy inequalities that conclude in a conditional quantum chain rule and, finally, in~\cref{sec:discussion} we discuss our results and propose some further work on the topic.

\subsection{Notation}

We denote by $D(\cdot\|\cdot)$ the relative entropy, both quantum and classical. We always assume we have some finite dimensional Hilbert space $\H$ and its state space $\S(\H)$. We use Greek letters for quantum states ($\rho,\,\sigma,\,\gamma,\,\omega,\,\tau$) and $p,q$ for classical probability distributions over some finite alphabet $\mc{X}$. Similarly, we use calligraphic letters for quantum maps ($\mc{M},\,\mc{N},\,\mc{R}$) and regular capitalised letters for classical stochastic maps ($M,\,N$). Finally, we denote the Kullblack-Leibler relative entropy between the probability distributions obtained by measuring a POVM $G$ of states $\rho$, $\sigma$ \cite{hayashi99} as $D_G(\rho\|\sigma)=D(P_\rho^G\|P_\sigma^G)$, where $P_\rho^G(j)=\Tr\cwich{G_j\rho}$, and equivalently for $P_\sigma^G$.
For a classical random variable $ X $ with distribution $ P $, we denote the expectation by $\mathbb{E}_{P}[X] := \sum_{x \in \mc{X}} P(x)\, X(x)$ for the discrete set $\mc{X}.$


\section{A Single-Letter Chain Inequality via Measurement-Induced Ensemble Partitions}
\label{sec:resultsClassicalMethods}

In this section we derive a single-letter chain inequality for quantum relative entropy. The proof is based on the DPI: we lift the output states to classical--quantum states carrying an additional branch register and then apply DPI to the partial trace over this register. Conversely, special cases of the inequality recover joint convexity of relative entropy, which is known to imply DPI. In this sense,~\cref{thm:dataProcess1} can be regarded as a chain-rule formulation of data processing.
After proving~\cref{thm:dataProcess1}, we discuss several aspects of the inequality. First, we show how it relates to DPI and to standard properties of relative entropy, including joint convexity. We then explain how the classical--quantum states appearing in the proof arise naturally from the Bayesian joint-state construction. Finally, we derive the semiclassical chain rule in~\cref{coro:difBasis} and compare the state-dependent branch term with $D_A=D^{\mathrm{reg}}$, the optimal state-independent bound in the many-copy chain rule.

The construction is motivated by the classical chain rule. Classically, the loss of distinguishability under two stochastic maps can be compared with an average of branchwise distinguishabilities, where the branches are the point distributions of the input alphabet. This elementary decomposition has no direct quantum analogue: a quantum state does not come equipped with a canonical decomposition into distinguishable points, and different
decompositions of the same density operator are generally incompatible. The main result of this section replaces these point distributions with measurement-induced partitions of quantum states.

Given a POVM $G=\{G_j\}_j$ and a state $\tau$, define
\begin{align}
    P_\tau^G(j)\tau_j
    =
    \sqrt{\tau}\,G_j^{T_\tau}\sqrt{\tau},
    \qquad
    P_\tau^G(j)
    =
    \operatorname{Tr}[G_j\tau],
\end{align}
where $T_\tau$ denotes transposition in a fixed diagonalizing basis of $\tau$. These operators satisfy
\begin{align}
    \tau
    =
    \sum_j P_\tau^G(j)\tau_j,
\end{align}
and therefore form an ensemble decomposition of $\tau$. Thus the unnormalized operators $P_\tau^G(j)\tau_j$ play the role of the classical branches $p_j\delta_j$.


\begin{thm}\label{thm:dataProcess1}
    Let $\rho$, $\sigma$ be quantum states on a finite dimensional Hilbert space $\H_A$. Let $G=\{G_j\}$ be a POVM.  Let $\mc{M},\mc{N}:\B(\H_A)\rightarrow\B(\H_B)$ be completely positive trace preserving (CPTP) maps. Then \begin{align}\label{eq:dataProcessing}
        D(\rho\|\sigma)-D(\mc{M}(\rho)\|\mc{N}(\sigma))\geq-\mathbb{E}_{P_\rho^G}D(\mc{M}(\rho_j)\|\mc{N}(\sigma_j)).
    \end{align}
\end{thm}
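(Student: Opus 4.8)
The plan is to lift the classical chain rule \cref{eq:cl_Chain_rule} to the quantum setting by passing through measurements on suitably chosen bipartite states, exactly as the excerpt's roadmap suggests. First I would fix the POVM $G=\{G_j\}$ and form the canonical purifications $\kket{\rho}$, $\kket{\sigma}$ together with the bipartite states $\omega_\rho^{\mc{M}}=(\id\otimes\mc{M})(\kketbbra{\rho})$ and $\omega_\sigma^{\mc{N}}=(\id\otimes\mc{N})(\kketbbra{\sigma})$. The key algebraic fact I would establish (or recall from \cite{leifer13,parzygnat23}) is that measuring the first (reference) system of $\omega_\tau^{\varepsilon}$ with the POVM $\{G_j^{T_\tau}\}$ produces outcome $j$ with probability $P_\tau^G(j)=\Tr[G_j\tau]$ and leaves the second system in the state $\varepsilon(\tau_j)$, where $\tau_j=\sqrt{\tau}\,G_j^{T_\tau}\sqrt{\tau}/\Tr[G_j\tau]$. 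Equivalently, the classical–quantum state obtained by recording the outcome is $\sum_j P_\tau^G(j)\,\ketbra{j}{j}\otimes\varepsilon(\tau_j)$. This is the bridge that turns ``ensemble partitions'' into post-measurement states of $\mc{M}(\rho_j)$ and $\mc{N}(\sigma_j)$.

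Next I would apply a \emph{joint} measurement to both systems of $\omega_\rho^{\mc{M}}$ and $\omega_\sigma^{\mc{N}}$: on the reference system the POVM $\{G_j^{T}\}$ (transposed appropriately), and on the output system some arbitrary fine-grained POVM $H=\{H_k\}$. This yields classical joint distributions $p(j,k)$ and $q(j,k)$ on the product alphabet, with marginals $p(j)=P_\rho^G(j)$, $q(j)=P_\sigma^G(j)$ and conditionals $p(k|j)=\Tr[H_k\,\mc{M}(\rho_j)]$, $q(k|j)=\Tr[H_k\,\mc{N}(\sigma_j)]$. Now I invoke the classical chain rule in the form: writing the ``global'' distributions as the images of the $j$-marginals under the stochastic maps $k\mapsto$ conditional, the decomposition \cref{eq:cl_Chain_rule} reads $D(p_J\|q_J)-D(p_K\|q_K)\ge -\mathbb{E}_{p_J}\,D\big(\mc{M}(\rho_j)^{H}\,\|\,\mc{N}(\sigma_j)^{H}\big)$, where superscript $H$ denotes the measured distribution. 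On the left, $D(p_J\|q_J)=D(P_\rho^G\|P_\sigma^G)\le D(\rho\|\sigma)$ by the DPI for measurements, and $D(p_K\|q_K)=D_H\big(\mc{M}(\rho)\,\|\,\mc{N}(\sigma)\big)$ since the $K$-marginal of the joint post-measurement distribution is exactly the distribution obtained by measuring $H$ on $\mc{M}(\rho)$ resp.\ $\mc{N}(\sigma)$ (the reference marginal of $\omega_\tau^\varepsilon$ being $\varepsilon(\tau)$). So far this gives $D(\rho\|\sigma)-D_H(\mc{M}(\rho)\|\mc{N}(\sigma))\ge -\mathbb{E}_{P_\rho^G}\,D_H(\mc{M}(\rho_j)\|\mc{N}(\sigma_j))$.

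The final step is to remove the measurements on the \emph{output} side so as to replace $D_H$ by the full quantum relative entropy $D$, which is where the claimed inequality becomes genuinely quantum. Taking the supremum over $H$ on the right-hand side is the wrong direction (it weakens the bound), so instead I would: (i) on the left, take the supremum over $H$ to get $D(\mc{M}(\rho)\|\mc{N}(\sigma))=\sup_H D_H(\mc{M}(\rho)\|\mc{N}(\sigma))$ — but this is only valid asymptotically, since the measured relative entropy equals $D$ only after regularization. This is precisely the ``extension to many copies'' mentioned in the roadmap: apply the above to $\mc{M}^{\otimes n}$, $\mc{N}^{\otimes n}$, $\rho^{\otimes n}$, $\sigma^{\otimes n}$ with a product POVM $G^{\otimes n}$ (whose partitions are $(\rho_j)^{\otimes}$-type products and whose induced distribution is $\prod$), divide by $n$, use that $\frac1n D_{H_n}(\mc{M}^{\otimes n}(\rho^{\otimes n})\|\cdot)\to D(\mc{M}(\rho)\|\cdot)$ for an optimal sequence $H_n$ (a consequence of the asymptotic achievability of measured relative entropy, essentially quantum Stein's lemma / the result that $D=\sup$ over measurements after regularization), and likewise $\frac1n D(\mc{M}^{\otimes n}(\rho_j^{\otimes n})\|\cdot)=D(\mc{M}(\rho_j)\|\cdot)$ by additivity, together with $\mathbb{E}_{(P_\rho^G)^{\otimes n}}[\tfrac1n\sum]=\mathbb{E}_{P_\rho^G}$ by linearity. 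The delicate point — the main obstacle — is the uniform control of the error terms in the $n\to\infty$ limit: one needs that the single term $\frac1n D_{H_n}(\mc{M}^{\otimes n}(\rho^{\otimes n})\|\mc{N}^{\otimes n}(\sigma^{\otimes n}))$ converges to $D(\mc{M}(\rho)\|\mc{N}(\sigma))$ along the \emph{same} sequence $H_n$ for which the right-hand side stays bounded, and that Uhlmann's monotonicity is being applied in the correct direction when one reconstructs the partitions of $\tau^{\otimes n}$ from those of $\tau$. Handling the $+\infty$ / support cases (when $\supp\mc{M}(\rho_j)\not\subseteq\supp\mc{N}(\sigma_j)$ the RHS is $-\infty$ and the bound is vacuous) is routine and I would dispatch it first.
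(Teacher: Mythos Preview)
Your approach is correct and matches the paper's proof essentially step for step: classical chain rule on the measured joint distributions, DPI to replace $D_G(\rho\|\sigma)$ by $D(\rho\|\sigma)$, and the Hiai--Petz many-copies limit to upgrade $D_{H}(\mc{M}(\rho)\|\mc{N}(\sigma))$ to $D(\mc{M}(\rho)\|\mc{N}(\sigma))$.

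The ``main obstacle'' you flag, however, is not one. On the right-hand side you have $-\mathbb{E}_{P_\rho^G} D_H(\mc{M}(\rho_j)\|\mc{N}(\sigma_j))$, and since $D_H\le D$ by DPI, replacing $D_H$ by $D$ there makes the right-hand side \emph{smaller} (more negative), i.e.\ it weakens the lower bound in the direction you want---the target inequality \eqref{eq:dataProcessing} \emph{is} the weakened one. The paper does exactly this at the single-copy level (the passage from \eqref{eq:has} to \eqref{eq:noNyet}), after which the right-hand side is $H$-independent. There is then no issue of choosing $H_n$ ``consistently for both sides'': you are free to pick any sequence $H_n$ achieving the Hiai--Petz limit $\tfrac1n D_{H_n}(\mc{M}(\rho)^{\otimes n}\|\mc{N}(\sigma)^{\otimes n})\to D(\mc{M}(\rho)\|\mc{N}(\sigma))$, while the right-hand side is already fixed at $-\mathbb{E}_{P_\rho^G} D(\mc{M}(\rho_j)\|\mc{N}(\sigma_j))$ by additivity of $D$ under the product partition.
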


\begin{proof}
Fix quantum states $\rho$, $\sigma$ and a POVM $G=\{G_j\}$. Let $\tau\in\{\rho,\sigma\}$. We denote the probability distribution obtained by measuring $\tau$ on $G$ by $P_\tau^G(j)=\Tr\cwich{G_j\tau}$. Moreover, we define the normalised partition elements \begin{equation}
    \tau_j=\frac{\sqrt{\tau}G_j^{T_\tau}\sqrt{\tau}}{\Tr\cwich{G_j\tau}}.
\end{equation} Additionally we have the quantum channels $\mc{M}$, $\mc{N}$ acting on the space of $\rho,\sigma$. 

Let $\tilde{\rho}$, $\tilde{\sigma}$ be the c-q states defined as \begin{equation}
    \tilde{\rho}=\sum_j P_\rho^G(j)\ketbra{j}_R\otimes\mc{M}(\rho_j), \quad \tilde{\sigma}=\sum_j P_\sigma^G(j)\ketbra{j}_R\otimes\mc{N}(\sigma_j),
\end{equation} where $\ketbra{j}_R$ is some classical register on a suitable Hilbert space $\H_R$. Note that $\Tr_R\cwich{\tilde{\rho}}=\mc{M}(\rho)$, $\Tr_R\cwich{\tilde{\sigma}}=\mc{N}(\sigma)$, since $\{P_\rho^G(j)\,\rho_j\}$, $\{P_\sigma^G(j)\,\sigma_j\}$ are partitions. We can write the Umegaki relative entropy of these states as 
\begin{equation}
\begin{split}
D(\tilde{\rho}\|\tilde{\sigma}) &= D\!\bigl(P_\rho^G\|P_\sigma^G\bigr)
+\sum_j P_\rho^G(j)\,D(\mc{M}(\rho_j)\|\mc{N}(\sigma_j))\\
&= D_{G}(\rho\|\sigma)+\sum_{j} P_\rho^G(j)\,D(\mc{M}(\rho_j)\|\mc{N}(\sigma_j)),
\end{split}
\end{equation} 
where the identity follows by evaluating the trace block by block in the common classical decomposition of $\widetilde{\rho}$ and $\widetilde{\sigma}$. On the $j$-th block one uses
$\log(p_j A_j)=\log p_j\,\mathbb{I}+\log A_j$, with
$p_j=P_\rho^G(j)$, $A_j=\mathcal{M}(\rho_j)$, and analogously for $\widetilde{\sigma}$~\cite[Ex. 11.8.8 \& Ex. 11.8.9]{wilde17}. Finally, by the data processing inequality for the partial trace\footnote{We are using only DPI for partial trace channels, but it can be shown to be equivalent to general DPI~\cite[Theorem 11.9.2]{wilde17}.} we obtain
\begin{align}
D(\mc{M}(\rho)\|\mc{N}(\sigma))
= D\!\bigl(\Tr_R[\tilde{\rho}]\,\big\|\,\Tr_R[\tilde{\sigma}]\bigr)
\le D(\tilde{\rho}\|\tilde{\sigma}) = D_G(\rho\|\sigma)+\sum_j P_\rho^G(j)\,D(\mc{M}(\rho_j)\|\mc{N}(\sigma_j)).
\end{align}
Using $D_G(\rho\|\sigma)\leq D(\rho\|\sigma)$, we find
\begin{align}
D(\mc{M}(\rho)\|\mc{N}(\sigma))
\le D(\rho\|\sigma)+\sum_j P_\rho^G(j)\,D(\mc{M}(\rho_j)\|\mc{N}(\sigma_j)),
\end{align}
which rearranges to~\cref{eq:dataProcessing}.

\end{proof}

\begin{remark}
A slightly strengthened inequality relative to~\cref{thm:dataProcess1} is obtained by stopping before the final step $D_G(\rho\|\sigma)\leq D(\rho\|\sigma)$. We choose to express the result with only quantum relative entropies in the Theorem, but it could also be shown as: 
\begin{align}
    D(\mathcal{M}(\rho)\|\mathcal{N}(\sigma))
    \le
    D_{G}(\rho\|\sigma)
    +\mathbb{E}_{P_\rho^G}D(\mc{M}(\rho_j)\|\mc{N}(\sigma_j)).
\end{align}
\end{remark}

\subsection{Corollaries and remarks}

    For a result such as \cref{thm:dataProcess1} one would hope that the data processing inequality can be recovered automatically when letting $\mc M=\mc N$, such as in~\cite{fang20}. Unfortunately, this is not, in general, the case for \cref{thm:dataProcess1}, even if we allow for a choice of the best possible measurement $G$. In this section we explore the relation between \cref{thm:dataProcess1} and DPI, as well as other properties and chain rules of the Umegaki relative entropy.

The inequality in~\cref{eq:dataProcessing} does not reduce to the standard data processing inequality when $\mc{M}=\mc{N}$ for the general case. 
However, if the input states commute, i.e., $[\rho,\sigma]=0$, one can choose a suitable measurement such that~\cref{eq:dataProcessing} specializes to the data processing inequality:

\begin{coro} \label{example:commutingInput}
    Let $\rho$, $\sigma$ be commuting states on some Hilbert space $\H_A$ with common basis $\{\Pi_j\}$ and $\mc{M},\mc{N}:\B(\H_A)\rightarrow \B(\H_B)$ be CPTP maps. Then \begin{align}\label{eq:commutingInput}
        D(\rho\|\sigma)-D(\mc{M}(\rho)\|\mc{N}(\sigma))\geq-\mathbb{E}_{p} D(\mc{M}(\Pi_j)\|\mc{N}(\Pi_j)).
    \end{align}
\end{coro}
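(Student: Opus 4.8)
The plan is to invoke \cref{thm:dataProcess1} with the projective measurement $G=\{\Pi_j\}$ in the common eigenbasis and to show that, for this particular $G$, the ensemble partitions $\rho_j$ and $\sigma_j$ both reduce to the projectors $\Pi_j$ themselves.

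First I would fix the orthonormal basis $\{\ket{j}\}$ with $\Pi_j=\ketbra{j}$. By hypothesis $\rho=\sum_j p_j\Pi_j$ and $\sigma=\sum_j q_j\Pi_j$ with $p_j=\Tr[\Pi_j\rho]$ and $q_j=\Tr[\Pi_j\sigma]$, so $\{\ket{j}\}$ simultaneously diagonalises $\rho$ and $\sigma$; this is the point where $[\rho,\sigma]=0$ enters. Taking this basis as the one defining the transposes $T_\rho$ and $T_\sigma$, every diagonal rank-one projector satisfies $\Pi_j^{T_\tau}=\Pi_j$, and $\sqrt{\tau}$ commutes with each $\Pi_j$. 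Hence for $\tau\in\{\rho,\sigma\}$ and every index $j$ with $\Tr[\Pi_j\tau]>0$,
\begin{align}
    \tau_j=\frac{\sqrt{\tau}\,\Pi_j^{T_\tau}\,\sqrt{\tau}}{\Tr[\Pi_j\tau]}
          =\frac{\tau\Pi_j}{\Tr[\Pi_j\tau]}=\Pi_j .
\end{align}

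Next I would dispatch the degenerate indices. If $\supp(\rho)\not\subseteq\supp(\sigma)$ then $D(\rho\|\sigma)=+\infty$ and \cref{eq:commutingInput} holds trivially, so assume $\supp(\rho)\subseteq\supp(\sigma)$, i.e.\ $p_j>0\Rightarrow q_j>0$. In the expectation $\mathbb{E}_{P_\rho^G}$ appearing in \cref{eq:dataProcessing} only indices with $p_j>0$ carry weight, and for those the previous display gives $\rho_j=\Pi_j$ and (since then $q_j>0$) also $\sigma_j=\Pi_j$, whence $D(\mc{M}(\rho_j)\|\mc{N}(\sigma_j))=D(\mc{M}(\Pi_j)\|\mc{N}(\Pi_j))$. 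Since $P_\rho^G=p$, substituting into \cref{eq:dataProcessing} produces exactly \cref{eq:commutingInput}.

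The only delicate point is the identity $\tau_j=\Pi_j$: one must check that the transpose $T_\tau$ in the definition of the partitions is taken in the common eigenbasis $\{\ket{j}\}$ (which exists precisely because the two states commute), and that the indices outside $\supp(\rho)$, where $\tau_j$ is not defined, are harmless because they carry zero probability under $p$. Beyond this bookkeeping there is nothing more to do, since the statement is a direct specialisation of \cref{thm:dataProcess1}.
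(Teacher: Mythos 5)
Your proposal is correct and follows essentially the same route as the paper: specialise \cref{thm:dataProcess1} to the projective measurement $G=\{\Pi_j\}$ in the common eigenbasis, observe that the ensemble partitions $\rho_j,\sigma_j$ collapse to $\Pi_j$ because the projectors are diagonal (hence self-transpose) in that basis and commute with $\sqrt{\rho},\sqrt{\sigma}$, and substitute. Your explicit treatment of the zero-probability and support-mismatch indices is a small bookkeeping refinement the paper omits, but it does not change the argument.
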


\begin{proof}
    In the common basis of $\rho$ and $\sigma$ we write 
    \begin{align}
        \rho=\sum_jp_j\Pi_j,\quad\sigma=\sum_jq_j\Pi_j.
    \end{align} 
    Let $G$ be the projective measurement on the common basis of $\rho,\sigma$, we can write $\rho_j,\sigma_j$ in~\cref{thm:dataProcess1} as \begin{align}
        \rho_j=\frac{(\sqrt{\rho}\Pi_j\sqrt{\rho})^T}{\Tr\cwich{\Pi_j\rho}}=\frac{p_j\Pi_j^T}{p_j}=\Pi_j^T,
    \end{align} and similarly $\sigma_j=\Pi_j^T$. Moreover, $P^G_\rho(j)=p_j$. Therefore~\cref{eq:dataProcessing} becomes 
    \begin{align}
        D(\rho\|\sigma)-D(\mc{M}(\rho)\|\mc{N}(\sigma))\geq-\mathbb{E}_{p}D(\mc{M}(\Pi_j^T)\|\mc{N}(\Pi_j^T)).
    \end{align} Finally, because the transpose is taken in the eigenbasis of $\rho$, in this case the canonical basis, each projector is self-transpose, leaving the final equation as: \begin{align}
        D(\rho\|\sigma)-D(\mc{M}(\rho)\|\mc{N}(\sigma))\geq-\mathbb{E}_{p}D(\mc{M}(\Pi_j)\|\mc{N}(\Pi_j)).
    \end{align} 
\end{proof}

This has the nice property that the right hand side relative entropies are almost independent of $\rho,\sigma$, as they only depend on joint diagonal basis of $\rho$ and $\sigma$. This means, for example, that the equation reduces to Uhlmann's inequality when $\mc{M}=\mc{N}$.

We can use this equation to show joint convexity of relative entropy as well as two extensions:


\begin{coro}\label{coro:jointConvexity}
   \cref{example:commutingInput} with $\rho=\sigma$ is equivalent to the joint convexity of the relative entropy. 
\end{coro}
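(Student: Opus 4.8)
The plan is to unpack the statement of \cref{example:commutingInput} in the case $\rho=\sigma$, observe that it becomes verbatim a joint-convexity inequality, and then close the equivalence by showing that a single well-chosen family of channels turns that special case back into the fully general statement of joint convexity.

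First I would substitute $\rho=\sigma$ into \cref{eq:commutingInput}. Writing $\rho=\sum_j p_j\Pi_j$ in the common eigenbasis, the term $D(\rho\|\sigma)=D(\rho\|\rho)$ vanishes, while linearity of $\mc{M}$ and $\mc{N}$ gives $\mc{M}(\rho)=\sum_j p_j\,\mc{M}(\Pi_j)$ and $\mc{N}(\sigma)=\sum_j p_j\,\mc{N}(\Pi_j)$. Hence \cref{example:commutingInput} with $\rho=\sigma$ is literally the statement
\[
D\!\Big(\textstyle\sum_j p_j\,\mc{M}(\Pi_j)\,\Big\|\,\sum_j p_j\,\mc{N}(\Pi_j)\Big)\;\le\;\sum_j p_j\,D\big(\mc{M}(\Pi_j)\,\big\|\,\mc{N}(\Pi_j)\big),
\]
valid for every probability vector $p$, every orthonormal family $\{\Pi_j\}$, and all CPTP maps $\mc{M},\mc{N}$. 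The implication ``joint convexity $\Rightarrow$ \cref{example:commutingInput} with $\rho=\sigma$'' is then immediate: apply joint convexity of $D(\cdot\|\cdot)$ to the two ensembles $\{(p_j,\mc{M}(\Pi_j))\}_j$ and $\{(p_j,\mc{N}(\Pi_j))\}_j$.

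For the converse I would run the construction backwards. Given an arbitrary finite ensemble of states $\{\alpha_j\}$, $\{\beta_j\}$ on a Hilbert space $\H_B$ with weights $\{p_j\}$, let $\H_A$ carry an orthonormal basis $\{\ket{j}\}$ indexed by the same $j$, put $\rho=\sigma=\sum_j p_j\ketbra{j}{j}$ so that $\Pi_j=\ketbra{j}{j}$ and $P_\rho^G(j)=p_j$, and define the measure-and-prepare channels
\[
\mc{M}(X)=\sum_j \Tr[\Pi_j X]\,\alpha_j,\qquad \mc{N}(X)=\sum_j \Tr[\Pi_j X]\,\beta_j,
\]
which are manifestly CPTP (they are entanglement-breaking, with Choi operators $\sum_j \Pi_j^{T}\otimes\alpha_j$ and $\sum_j \Pi_j^{T}\otimes\beta_j$). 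Then $\mc{M}(\Pi_j)=\alpha_j$, $\mc{N}(\Pi_j)=\beta_j$, $\mc{M}(\rho)=\sum_j p_j\alpha_j$, $\mc{N}(\sigma)=\sum_j p_j\beta_j$, and \cref{example:commutingInput} with these choices reads $D(\sum_j p_j\alpha_j\|\sum_j p_j\beta_j)\le\sum_j p_j D(\alpha_j\|\beta_j)$, i.e.\ joint convexity in full generality. Combined with the previous paragraph this establishes the equivalence.

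The argument is bookkeeping rather than analysis; the only point that genuinely needs checking is that the measure-and-prepare maps are bona fide CPTP maps with the claimed action on the $\Pi_j$, and that the two sides of \cref{eq:commutingInput} evaluate to the stated convex combinations — both of which follow from linearity of channels, $D(\rho\|\rho)=0$, and the identity $\rho_j=\Pi_j$ already computed in the proof of \cref{example:commutingInput}. I do not foresee a real obstacle here.
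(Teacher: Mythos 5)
Your proof is correct and follows essentially the same route as the paper: one direction is joint convexity applied to the ensembles $\{(p_j,\mc{M}(\Pi_j))\}_j$ and $\{(p_j,\mc{N}(\Pi_j))\}_j$, and the other realizes an arbitrary pair of ensembles via measure-and-prepare channels $X\mapsto\sum_j\Tr[\Pi_j X]\,\alpha_j$, exactly as in the paper's argument. No gaps.
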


\begin{proof}
\emph{(Corollary with $\rho=\sigma$ $\Rightarrow$ Joint convexity)} 
Consider the probability distribution $p_{j}$ and collections of states $\{\tau_{j}\}$, $\{\mu_{j}\}$ on a finite-dimensional Hilbert space. Let $\{\Pi_j\}$ be any family of orthogonal rank-one projectors with $\sum_{j} \Pi_{j} = \mathbb{I}$, and set $\rho = \sum_{j} p_{j} \Pi_{j}$.

Let $\mc{M},\mc{N}$ be the classical-quantum channels
\begin{align}\label{eq:MN}
\mathcal M(X)   = \sum_i \tau_i \,\Tr[\Pi_i X], 
\qquad 
\mathcal N(X) = \sum_i \mu_i \,\Tr[\Pi_i X].
\end{align}

By~\cref{example:commutingInput} specialized to $\rho=\sigma$ and to the decomposition of $\rho$ in the basis $\{\Pi_j\}$, we have
\begin{align}\label{eq:example-ineq}
-D(\mathcal M(\rho)\,\|\,\mathcal N(\rho))
\;\ge\;
-\sum_j p_j\, D\bigl(\mathcal M(\Pi_j)\,\|\,\mathcal N(\Pi_j)\bigr).
\end{align}
Using \cref{eq:MN}, this reads
\begin{align}
-D\!\left(\sum_j p_j \tau_j \,\middle\|\, \sum_j p_j \mu_j\right)
\;\ge\;
-\sum_j p_j\, D(\tau_j\|\mu_j),
\end{align}
which is exactly joint convexity of the relative entropy.

\medskip
\emph{(Joint convexity $\Rightarrow$ Corollary with $\rho=\sigma$)}
Conversely, assume joint convexity. Let $\rho$ be any state with spectral decomposition $\rho=\sum_j p_j \Pi_j$ in an orthonormal eigenbasis $\{\Pi_j\}$. For arbitrary channels $\mathcal M,\,\mathcal N$, define
\begin{align}
\tau_{j} = \mathcal M(\Pi_{j}), \qquad \mu_{j} = \mathcal N(\Pi_{j}).
\end{align}
Applying joint convexity to the ensembles $\{p_j,\tau_j\}$ and $\{p_j,\mu_j\}$ gives
\begin{align}
D\!\left(\sum_j p_j \tau_j \,\middle\|\, \sum_j p_j \mu_j\right)
\;\le\;
\sum_j p_j D(\tau_j\|\mu_j).
\end{align}
Since $\sum_j p_j \tau_j=\mathcal M(\rho)$ and $\sum_j p_j \mu_j=\mathcal N(\rho)$, this is precisely
\begin{align}
-D(\mathcal M(\rho)\,\|\,\mathcal N(\rho))
\;\ge\;
-\sum_j p_j\, D\bigl(\mathcal M(\Pi_j)\,\|\,\mathcal N(\Pi_j)\bigr),
\end{align}
which is the $\rho=\sigma$ instance of~\cref{example:commutingInput}. Hence the two statements are equivalent.
\end{proof}

\begin{remark}\label{rem:DPIproof}
    It was shown in~\cite{ruskai07} by Ruskai that joint convexity is sufficient to establish both the data processing inequality~\cite{lindblad75} and the strong subadditivity of quantum relative entropy~\cite{lieb73}. By contrast, the main-text proof of~\cref{thm:dataProcess1} works by lifting the problem to the c-q states $\tilde{\rho}$ and $\tilde{\sigma}$, applying the data processing inequality for the partial trace, and then using monotonicity under the measurement $G$ to pass from $D_G(\rho\|\sigma)$ to $D(\rho\|\sigma)$ \cite{hayashi06,mullerLennert13}. The appendix proof instead derives~\cref{thm:dataProcess1} from the classical chain rule via Jensen's inequality, Uhlmann's inequality, and the asymptotic equipartition property of Hiai and Petz~\cite{hiai91}. Consequently,~\cref{coro:jointConvexity} can in turn be employed to rederive these properties from~\cref{thm:dataProcess1}.
\end{remark}

Notice that~\cref{example:commutingInput} also provide an extension to the joint convexity when $\rho = \sigma$ is relaxed.


\begin{coro}\label{coro:ensembles}
    Let $\{\tau_i\},\{\mu_i\}\subseteq\S(\H_B)$ be collections of states and let $p=\{p_j\}, q=\{q_j\}$  probability distributions. Then 
    \begin{align}
            D(p\|q)-D\left(\sum_jp_j\tau_j\right|\!\left|\sum_jq_j\mu_j\right)\geq  -\sum_jp_jD(\tau_j\|\mu_j).
    \end{align}
\end{coro}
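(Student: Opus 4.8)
The plan is to obtain Corollary~\ref{coro:ensembles} as a direct application of Corollary~\ref{example:commutingInput} by choosing the input states to be classical (commuting) and the channels to be classical-quantum channels that prepare the given ensembles. Concretely, on a Hilbert space with orthonormal rank-one projectors $\{\Pi_j\}$, set $\rho=\sum_j p_j\Pi_j$ and $\sigma=\sum_j q_j\Pi_j$; these commute and share the common eigenbasis $\{\Pi_j\}$. Then define the classical-quantum channels $\mc{M}(X)=\sum_i\tau_i\Tr[\Pi_i X]$ and $\mc{N}(X)=\sum_i\mu_i\Tr[\Pi_i X]$, exactly as in the proof of Corollary~\ref{coro:jointConvexity}. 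With these choices, $\mc{M}(\rho)=\sum_j p_j\tau_j$, $\mc{N}(\sigma)=\sum_j q_j\mu_j$, $\mc{M}(\Pi_j)=\tau_j$ and $\mc{N}(\Pi_j)=\mu_j$, while the distribution $p$ in the expectation of Corollary~\ref{example:commutingInput} is precisely $\{p_j\}$.

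Plugging these into the inequality of Corollary~\ref{example:commutingInput},
\[
D(\rho\|\sigma)-D(\mc{M}(\rho)\|\mc{N}(\sigma))\;\geq\;-\mathbb{E}_p\, D(\mc{M}(\Pi_j)\|\mc{N}(\Pi_j)),
\]
the left-hand term $D(\rho\|\sigma)$ becomes $D(p\|q)$ since $\rho,\sigma$ are diagonal with entries $p_j,q_j$ in the same basis; the middle term becomes $D\!\left(\sum_j p_j\tau_j\,\middle\|\,\sum_j q_j\mu_j\right)$; and the right-hand side becomes $-\sum_j p_j D(\tau_j\|\mu_j)$. This is exactly the claimed inequality. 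I would also remark that this is the natural generalization of joint convexity: setting $p=q$ recovers Corollary~\ref{coro:jointConvexity}, and the extra term $D(p\|q)$ on the left accounts for the distinguishability already present in the classical labels.

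One small technical point to address is that Corollary~\ref{example:commutingInput} is stated for states on a Hilbert space $\H_A$ with a common eigenbasis of orthogonal (a priori rank-one) projectors summing to the identity, so I should make sure the ambient dimension is at least the number of labels $j$ and that ties in the spectra of $\rho$ or $\sigma$ do not cause trouble — but since we are free to choose $\{\Pi_j\}$ to be any orthonormal family (we are defining $\rho$ and $\sigma$ ourselves, not diagonalizing given states), this is immediate as long as the index set is finite, which it is. No genuine obstacle arises here; the content is entirely in Corollary~\ref{example:commutingInput} (hence ultimately in Theorem~\ref{thm:dataProcess1}), and this corollary is just the specialization to classical inputs via cq-channels. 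The only thing to be careful about when writing it up is to keep the roles of $\mc{M},\mc{N}$ and of $\rho,\sigma$ consistent with the $\tau$-dependent transpose convention used to define $\rho_j,\sigma_j$ in Theorem~\ref{thm:dataProcess1}; but as shown in the proof of Corollary~\ref{example:commutingInput}, in the common eigenbasis the projectors are self-transpose, so $\rho_j=\Pi_j$ and $\sigma_j=\Pi_j$ with no sign or ordering subtleties.
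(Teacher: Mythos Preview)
Your proposal is correct and follows essentially the same approach as the paper: define commuting diagonal states $\rho=\sum_j p_j\Pi_j$, $\sigma=\sum_j q_j\Pi_j$, take the classical--quantum channels $\mc{M}(X)=\sum_i\tau_i\Tr[\Pi_i X]$ and $\mc{N}(X)=\sum_i\mu_i\Tr[\Pi_i X]$, and apply Corollary~\ref{example:commutingInput}. The additional remarks you make about dimension, spectral ties, and the transpose convention are accurate but not needed for the argument.
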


\begin{proof}
Let $\{\Pi_j\}$ be a family of mutually orthogonal rank-one projectors with $\sum_j \Pi_j=\I$ and define
$\rho=\sum_j p_j \Pi_j$, $\sigma=\sum_j q_j \Pi_j$, where $p,q$ are probability distributions.
Note that $\rho$ and $\sigma$ commute, hence $D(\rho\|\sigma) = D(p\|q)$.   Define the channels 
\begin{align}
        \mc{M}(x)=\sum_i\tau_i\Tr\cwich{\Pi_i x}\quad \text{and}\quad \mc{N}(x)=\sum_i\mu_i\Tr\cwich{\Pi_i x}.
    \end{align}
 Applying inequality~\cref{eq:commutingInput} to $\rho,\sigma$ with these maps gives
\begin{align}
   D(p \| q ) - D\!\left(\mc{M}(\rho)\,\middle\|\,\mc{N}(\sigma)\right)  \geq - \sum_j p_j\, D\!\big(\mc{M}(\Pi_j)\,\|\,\mc{N}(\Pi_j)\big).
\end{align}
By construction, $\mc{M}(\rho) = \sum_j p_j \tau_j$
 $\mc{N}(\sigma) = \sum_j q_j \mu_j$, 
 $\mc{M}(\Pi_j) = \tau_j$ and $\mc{N}(\Pi_j) = \mu_j$.

Substituting these identities gives the claim.
\end{proof}

Finally, in the general case where the states and channels are arbitrary, 
we obtain a semiclassical variant of the chain rule. 
It relates the relative entropy of the eigenvalue distributions of $\rho$ and $\sigma$ 
to the divergence of their images under the maps $\mc{M},\mc{N}$ and the action of these maps on the corresponding eigenprojectors.

\begin{coro}\label{coro:difBasis}
Let $\rho,\sigma \in \S(\H_A)$ be quantum states with spectral decompositions
$\rho = \sum_j p_j \Pi_j$ and $\sigma = \sum_j q_j \tilde{\Pi}_j$,
where $p=\{p_j\}$ and $q=\{q_j\}$ are the eigenvalue distributions and 
$\{\Pi_j\},\{\tilde{\Pi}_j\}$ are the associated rank-one eigenprojectors.  
Let $\mc{M},\mc{N}:\B(\H_A)\to\B(\H_B)$ be CPTP maps. 
Then
\begin{align}
   D(p\|q) - D\!\bigl(\mc{M}(\rho)\,\|\,\mc{N}(\sigma)\bigr)
   \;\;\geq\; -\,\mathbb{E}_{p} D\!\bigl(\mc{M}(\Pi_j)\,\|\,\mc{N}(\tilde{\Pi}_j)\bigr).
\end{align}
\end{coro}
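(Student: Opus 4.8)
The plan is to recognize \cref{coro:difBasis} as essentially an immediate specialization of \cref{coro:ensembles}. The key observation is purely linear-algebraic: since $\mc{M}$ and $\mc{N}$ are linear (indeed CPTP) and $\rho=\sum_j p_j\Pi_j$, $\sigma=\sum_j q_j\tilde\Pi_j$ are the spectral decompositions, one has $\mc{M}(\rho)=\sum_j p_j\,\mc{M}(\Pi_j)$ and $\mc{N}(\sigma)=\sum_j q_j\,\mc{N}(\tilde\Pi_j)$. In other words, $\mc{M}(\rho)$ and $\mc{N}(\sigma)$ are exactly the barycenters of the ensembles $\{p_j,\mc{M}(\Pi_j)\}$ and $\{q_j,\mc{N}(\tilde\Pi_j)\}$, so the situation is precisely the one already covered by \cref{coro:ensembles}.

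Concretely, I would set $\tau_j\eqdef\mc{M}(\Pi_j)$ and $\mu_j\eqdef\mc{N}(\tilde\Pi_j)$; these lie in $\S(\H_B)$ because $\Pi_j,\tilde\Pi_j$ are (rank-one) density operators and $\mc{M},\mc{N}$ are CPTP. The eigenvalue distributions $p=\{p_j\}$ and $q=\{q_j\}$ are genuine probability distributions, indexed over the common finite set $\{1,\dots,d\}$ with $d=\dim\H_A$ (writing out both spectra in full, allowing zero eigenvalues, so the two index sets coincide). Applying \cref{coro:ensembles} to these data yields
\begin{align}
D(p\|q)-D\!\left(\sum_j p_j\tau_j\,\middle\|\,\sum_j q_j\mu_j\right)\;\ge\;-\sum_j p_j\,D(\tau_j\|\mu_j),
\end{align}
and substituting $\sum_j p_j\tau_j=\mc{M}(\rho)$, $\sum_j q_j\mu_j=\mc{N}(\sigma)$, $\tau_j=\mc{M}(\Pi_j)$, $\mu_j=\mc{N}(\tilde\Pi_j)$, together with $\sum_j p_j\,D(\mc{M}(\Pi_j)\|\mc{N}(\tilde\Pi_j))=\mathbb{E}_p D(\mc{M}(\Pi_j)\|\mc{N}(\tilde\Pi_j))$, gives exactly the claimed inequality.

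Alternatively, one can bypass \cref{coro:ensembles} and argue straight from \cref{example:commutingInput}: introduce an auxiliary register carrying an orthonormal family of rank-one projectors $\{E_j\}$ with $\sum_j E_j=\I$, form the commuting classical states $\rho'=\sum_j p_jE_j$, $\sigma'=\sum_j q_jE_j$ (so that $D(\rho'\|\sigma')=D(p\|q)$), and the classical-to-quantum channels $\mc{M}'(X)=\sum_j\mc{M}(\Pi_j)\Tr[E_jX]$ and $\mc{N}'(X)=\sum_j\mc{N}(\tilde\Pi_j)\Tr[E_jX]$; then $\mc{M}'(\rho')=\mc{M}(\rho)$, $\mc{N}'(\sigma')=\mc{N}(\sigma)$, $\mc{M}'(E_j)=\mc{M}(\Pi_j)$, $\mc{N}'(E_j)=\mc{N}(\tilde\Pi_j)$, and \cref{eq:commutingInput} applied to $\rho',\sigma',\mc{M}',\mc{N}'$ is the asserted bound. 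I do not expect a genuine obstacle here; the only point needing (minor) care is the bookkeeping when $\rho$ and $\sigma$ have nonzero spectra of different lengths, which is handled once and for all by indexing both decompositions over a common set and permitting zero eigenvalues (the vanishing-weight terms drop out of both the barycenters and the expectation on the right).
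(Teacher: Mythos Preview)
Your proposal is correct. Your primary route---specializing \cref{coro:ensembles} with $\tau_j=\mc{M}(\Pi_j)$ and $\mu_j=\mc{N}(\tilde\Pi_j)$---is exactly the alternative derivation the paper itself points out immediately after its proof; the paper's written proof instead introduces a unitary $\mc{U}$ aligning the eigenbases so that $\rho$ and $\sigma'=\sum_j q_j\Pi_j$ commute, and then applies \cref{example:commutingInput} directly to $(\rho,\sigma',\mc{M},\mc{N}\circ\mc{U}^{-1})$. The two reductions are essentially the same content packaged differently: the paper's unitary trick keeps the channels acting on the original Hilbert space, while your approach (and your second variant via an auxiliary classical register and cq-channels) is precisely how \cref{coro:ensembles} is proved in the first place.
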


\begin{proof}
First observe that, by the spectral decomposition of $\sigma$,  
\begin{align}
   \mc N(\sigma) = \sum_j q_j \mc N(\tilde{\Pi}_j).
\end{align}
Choose a unitary channel $\mc U$ mapping the eigenbasis of $\sigma$ to that of $\rho$, i.e.\ 
$\mc{U}(\tilde{\Pi}_j)=\Pi_j$\footnote{Note that such a unitary is not unique, and different choices may lead to different values of 
$D(p\|q)$ and of the average 
$\mathbb{E}_{p}[D(\mc{M}(\Pi_i)\|\mc{N}(\tilde{\Pi}_i))]$; this dependence is expanded upon in~\cref{rem:ordering}.}.   

Define $\mc F=\mc N\circ \mc U$ and $\sigma'=\sum_j q_j \Pi_j$.
Then $\mc N(\sigma)=\mc F(\sigma')$, and note that $\rho$ and $\sigma'$ commute.  

Applying the inequality~\cref{eq:commutingInput} to $\rho,\sigma'$ and the channels 
$\mc M,\mc F$ yields
\begin{align}
   D(p\|q) - D(\mc M(\rho)\|\mc N(\sigma))
   \;\;\geq\; - \sum_j p_j D(\mc M(\Pi_j)\|\mc N(\tilde{\Pi}_j)),
\end{align}
as claimed.
\end{proof}

\begin{remark}\label{rem:ordering}
The inequalities in~\cref{coro:ensembles} and~\cref{coro:difBasis} depend on an arbitrary choice of ordering: of the ensemble elements in~\cref{coro:ensembles} and of the eigenbasis in~\cref{coro:difBasis}. In both cases the second term on the left-hand side,
\begin{align*}
   D\!\left(\sum_j p_j \tau_j \,\middle\|\, \sum_j q_j \mu_j\right) 
   \quad\text{and}\quad
   D(\mc M(\rho)\|\mc N(\sigma)),
\end{align*}
is invariant under reordering. This is also the term of primary interest. 
A possible refinement would be to optimize the bound over all $n!$ permutations, where $n$ is the size of the ensemble or, equivalently, the Hilbert space dimension.
\end{remark}

Moreover, notice that~\cref{coro:difBasis} can be recovered directly from~\cref{coro:ensembles}: 
choosing $\{p_i,\tau_i\}=\{p_i,\mc M(\Pi_i)\}$ and 
$\{q_i,\mu_i\}=\{q_i,\mc N(\tilde{\Pi}_i)\}$ in~\cref{coro:ensembles} yields 
\cref{coro:difBasis} immediately.

\subsection{Bayesian interpretation of the c-q states}\label{subsec:bayesian_interpretation}

In this subsection we explain how the classical--quantum states used in the proof of~\cref{thm:dataProcess1} arise from the Bayesian joint-state construction~\cite{fuchs01,leifer13,horsman17}. This identifies the branch register with the outcome of a measurement on a reference system. 

This perspective also connects the direct c-q proof in the main text with the alternative classical proof in~\cref{app:altProof}. In the latter proof, both the reference and output systems are measured, and the use of DPI is replaced by the classical chain rule for relative entropy; the quantum statement is then recovered through a measurement-to-quantum limiting argument.

We begin by recalling the relevant constructions involving ensemble partitions. For any state $\tau$ and POVM $G=\{G_j\}$, define
\begin{align}
   \tau_j &= \frac{\sqrt{\tau}\, G_j^{T_\tau}\, \sqrt{\tau}}{\Tr[G_j \tau]}, 
   & P_\tau^G(j) &= \Tr[G_j \tau],
\end{align}
where $T_{\tau}$ denotes the transpose in a diagonal basis of $\tau$ and $P_\tau^G(j)=\Tr\cwich{G_j\tau}$ is the probability distribution associated with measuring $G$  on $\tau$.

Given a state $\tau$ and a quantum channel $\varepsilon$, we also consider the bipartite state
\begin{align}
   \omega_\tau^\varepsilon = (\id \otimes \varepsilon)(\kketbbra{\tau}),
\end{align}
where $\kket{\tau} = \sum_k \sqrt{t_k}\,\ket{t_k,t_k}$ is the canonical purification of $\tau$.

These constructions were first introduced by Fuchs~\cite{fuchs01} in the study of quantum conditional probabilities and later developed by Leifer and collaborators~\cite{leifer06,leifer13} as part of a program to generalize Bayes' theorem to the quantum setting. In this framework, the unnormalised $\tau_j$ are understood as ensemble partitions of $\tau$, providing the analogue of the Bayesian update rule for classical distributions under measurement.
More recently, Parzygnat and Fullwood~\cite{parzygnat23} reformulated these constructions in a structural framework, treating states and channels in terms of their compositional properties rather than their specific operator representations, thereby giving structural interpretations of partitions as quantum conditionals. The bipartite state $\omega_\tau^\varepsilon$ was proposed in~\cite{leifer13} as a candidate for the quantum analogue of the joint distribution $p(y|x)p(x)$, although later work~\cite{horsman17,fullwood22,lie23} has argued that, while it shares some features with joint classical distributions, it may not fully capture the operational or informational content of such objects. Building on these constructions, we obtain~\cref{thm:dataProcess1}.

In this setting, we can see how the measurement $G$ naturally appears as part of a physical operation on the joint states.

Let $\omega_\tau^{\mathcal{E}}$ be the Bayesian joint state associated with a state $\tau$ and a channel $\mathcal{E}$. Let $G=\{G_j\}_j$ be a POVM on the reference system of this joint state. We denote by
\begin{align}
    \Phi_G(X_R)
    =
    \sum_j \Tr[G_jX_R]\ketbra{j}_J
\end{align}
the usual quantum-to-classical measurement channel associated with $G$, and define its extension to the joint system by
\begin{align}
    \mathcal{Q}_G
    =
    \Phi_G\otimes\id_B .
\end{align}
Equivalently, for an operator $X_{RB}$ on the joint system,
\begin{align}
    \mathcal{Q}_G(X_{RB})
    =
    \sum_j
    \ketbra{j}_J
    \otimes
    \Tr_R\!\left[
        (G_j\otimes \mathbb{I}_B)X_{RB}
    \right].
    \label{eq:QG_definition_main}
\end{align}
Since the trace is over the measured system, this can also be written as
\begin{align}
    \mathcal{Q}_G(X_{RB})
    =
    \sum_j
    \ketbra{j}_J
    \otimes
    \Tr_R\!\left[
        (\sqrt{G_j}\otimes \mathbb{I}_B)
        X_{RB}
        (\sqrt{G_j}\otimes \mathbb{I}_B)
    \right],
\end{align}
which makes complete positivity manifest.

By applying this channel to $\omega^{\mc{E}}_\rho$ we can recover the c-q state in the proof of \cref{thm:dataProcess1} since
\begin{align}
    \Tr_R\!\left[
        (G_j\otimes \mathbb{I}_B)
        \omega_\tau^{\mathcal{E}}
    \right]
    =
    \mathcal{E}\!\left(
        \sqrt{\tau}\,G_j^{T_\tau}\sqrt{\tau}
    \right).
    \label{eq:bayesian_block_identity_main}
\end{align}
Therefore, with
\begin{align}
    P_\tau^G(j)=\Tr[G_j\tau],
    \qquad
    \tau_j
    =
    \frac{
        \sqrt{\tau}\,G_j^{T_\tau}\sqrt{\tau}
    }{
        P_\tau^G(j)
    },
\end{align}
for all outcomes with $P_\tau^G(j)>0$, we obtain
\begin{align}
    \mathcal{Q}_G(\omega_\tau^{\mathcal{E}})
    =
    \sum_j
    P_\tau^G(j)\,
    \ketbra{j}_J
    \otimes
    \mathcal{E}(\tau_j).
    \label{eq:bayesian_to_cq_main}
\end{align}
Thus the c-q states used in the proof of the chain inequality are precisely the Bayesian joint states after applying the measurement channel associated with $G$ to the reference system.

In particular, for the pairs $(\rho,\mathcal{M})$ and
$(\sigma,\mathcal{N})$, we obtain
\begin{align}
    \widetilde{\rho}
    =
    \mathcal{Q}_G(\omega_\rho^{\mathcal{M}})
    =
    \sum_j
    P_\rho^G(j)
    \ketbra{j}_J
    \otimes
    \mathcal{M}(\rho_j),
\end{align}
and
\begin{align}
    \widetilde{\sigma}
    =
    \mathcal{Q}_G(\omega_\sigma^{\mathcal{N}})
    =
    \sum_j
    P_\sigma^G(j)
    \ketbra{j}_J
    \otimes
    \mathcal{N}(\sigma_j).
\end{align}

This also clarifies the relation with the alternative proof in \cref{app:altProof}. If one additionally measures the output system with a POVM $F=\{F_i\}_i$, with measurement channel
\begin{align}
    \Phi_F(Y_B)
    =
    \sum_i
    \Tr[F_iY_B]\ketbra{i}_I,
\end{align}
then the fully measured state defines the classical joint distribution
\begin{align}
    P_{\tau,\mathcal{E}}^{G,F}(j,i)
    =
    P_\tau^G(j)\,
    \Tr[F_i\mathcal{E}(\tau_j)].
\end{align}
Equivalently,
\begin{align}
    P_{\tau,\mathcal{E}}^{G,F}(j,i)
    =
    \Tr\!\left[
        (G_j\otimes F_i)
        \omega_\tau^{\mathcal{E}}
    \right].
\end{align}
Hence the two proofs are connected by the hierarchy
\begin{align}
    \omega_\tau^{\mathcal{E}}
    \quad
    \xrightarrow{\ \mathcal{Q}_G\ }
    \quad
    \sum_j
    P_\tau^G(j)\ketbra{j}_J\otimes\mathcal{E}(\tau_j)
    \quad
    \xrightarrow{\ \id_J\otimes\Phi_F\ }
    \quad
    P_{\tau,\mathcal{E}}^{G,F}(j,i).
\end{align}
The proof in the main text keeps the output system quantum;it then uses DPI for the partial trace on the resulting c-q
states. The alternative proof measures the output system as well, so that the argument becomes fully classical and uses the classical chain rule for relative entropy. The quantum statement is then recovered through the
measurement-to-quantum limiting argument described in~\cref{app:altProof}.

\subsection{Comparison with the Optimal State-Independent Bound}
\label{sec:DregComp}

We now compare the branch term in!\cref{thm:dataProcess1} with the state-independent constant that appears in the many-copy chain rule. The relevant benchmark is the amortized channel divergence $D_A(\mc{M}\|\mc{N})$. By definition, $D_A(\mc{M}\|\mc{N})$ is the smallest constant $C(\mc{M},\mc{N})$, independent of the input states, such that
\begin{align}
    D\!\left(
        (\mc{M})(\rho)
        \middle\|
        (\mc{N})(\sigma)
    \right)
    \leq
    D(\rho\|\sigma)
    +
    C(\mc{M},\mc{N})
\end{align}
holds for all finite-dimensional ancillary systems $R$ and all states $\rho ,\sigma$. Fang~\emph{et al.}~\cite{fang20} showed that this amortized quantity  coincides with the regularized channel divergence,
\begin{align}
    D_A(\mc{M}\|\mc{N})
    =
    D^{\mathrm{reg}}(\mc{M}\|\mc{N}).
\end{align}
Thus $D^{\mathrm{reg}}(\mc{M}\|\mc{N})$ is the optimal state-independent correction term in the asymptotic channel chain rule.
By contrast, \cref{thm:dataProcess1} replaces the state-independent correction term $D^{\mathrm{reg}}(\mc{M}\|\mc{N})$ by the state- and
measurement-dependent branch average
\begin{align}
    \mathbb{E}_{P_{\rho}^{G}}
    D\!\left(\mc{M}(\rho_{j})\middle\|\mc{N}(\sigma_{j})\right).
\end{align}
These two quantities are not directly comparable in general: the former is an intrinsic channel quantity, optimized independently of the input states, whereas the latter depends on the chosen states $(\rho,\sigma)$ and on the
POVM-induced partitions. 

The following lemma gives one useful comparison. It bounds the branch average by $D^{\mathrm{reg}}(\mc{M}\|\mc{N})$ plus a
mismatch term measuring how far the conditional branches $\rho_j$ and $\sigma_j$ remain apart after applying $\mc{N}$.
\begin{align}
    D_{\max}(X\|Y) :=
    \inf\left\{\lambda\in\mathbb{R}:X\leq e^{\lambda}Y\right\},
\end{align}
with the convention that $D_{\max}(X\|Y)=+\infty$ if no finite $\lambda$ satisfies the operator inequality.

\begin{lemma}[Comparison with $D^{\mathrm{reg}}$]
\label{lemma:DmaxComparisonDreg}
Let $\rho,\sigma\in\S(\H_{A})$, let $G=\{G_{j}\}_{j}$ be a POVM, and let
$\rho_{j},\sigma_{j}$ be the conditional states appearing in
\cref{thm:dataProcess1}. Then
\begin{align}
    &\mathbb{E}_{P_{\rho}^{G}}
    D\!\left(\mc{M}(\rho_{j})\middle\|\mc{N}(\sigma_{j})\right)\leq
    D^{\mathrm{reg}}(\mc{M}\|\mc{N})+
    \mathbb{E}_{P_{\rho}^{G}}D_{\max}\!\left(\mc{N}(\rho_{j})\middle\|\mc{N}(\sigma_{j})\right).
    \label{eq:DmaxComparisonDreg}
\end{align}
\end{lemma}

\begin{proof}
Fix a branch $j$ with $P_{\rho}^{G}(j)>0$. If
\begin{align}
    D_{\max}\!\left(\mc{N}(\rho_{j})\middle\|\mc{N}(\sigma_{j})\right)=+\infty,
\end{align}
there is nothing to prove for this branch. Otherwise, since we work in finite
dimension, the infimum in $D_{\max}$ is attained whenever it is finite. Hence
\begin{align}
    \mc{N}(\sigma_{j})
    \geq
    \exp\!\left[-D_{\max}\!\left(\mc{N}(\rho_{j})\middle\|\mc{N}(\sigma_{j})\right)\right]
    \mc{N}(\rho_{j}).
\end{align}
By operator monotonicity of the logarithm,
\begin{align}
    -\log\mc{N}(\sigma_{j})\leq-\log\mc{N}(\rho_{j})+D_{\max}\!\left(\mc{N}(\rho_{j})\middle\|\mc{N}(\sigma_{j})\right)\I .
\end{align}
Multiplying by the positive operator $\mc{M}(\rho_{j})$, taking the trace,
and using $\Tr[\mc{M}(\rho_{j})]=1$, we obtain
\begin{align}
    &D\!\left(\mc{M}(\rho_{j})\middle\|\mc{N}(\sigma_{j})\right)
    \leq
    D\!\left(\mc{M}(\rho_{j})\middle\|\mc{N}(\rho_{j})\right)
    +
    D_{\max}\!\left(\mc{N}(\rho_{j})\middle\|\mc{N}(\sigma_{j})\right).
    \label{eq:branchDmaxComparison}
\end{align}
The first term on the right-hand side is bounded by the channel
relative entropy:
\begin{align}
    D\!\left(\mc{M}(\rho_{j})\middle\|\mc{N}(\rho_{j})\right)\leq
    D(\mc{M}\|\mc{N})  \leq D^{\mathrm{reg}}(\mc{M}\|\mc{N}),
\end{align}
Combining this with~\cref{eq:branchDmaxComparison} gives
\begin{align}
    D\!\left(\mc{M}(\rho_{j})\middle\|\mc{N}(\sigma_{j})\right)
    \leq
    D^{\mathrm{reg}}(\mc{M}\|\mc{N})
    +
    D_{\max}\!\left(\mc{N}(\rho_{j})\middle\|\mc{N}(\sigma_{j})
    \right).
\end{align}
Averaging over $P_{\rho}^{G}$ gives \cref{eq:DmaxComparisonDreg}.
\end{proof}

\begin{remark}
    The bound can be loosened to remove the channel dependence using DPI for the max relative entropy to obtain \begin{align}
    &\mathbb{E}_{P_{\rho}^{G}}
    D\!\left(\mc{M}(\rho_{j})\middle\|\mc{N}(\sigma_{j})\right)\leq
    D^{\mathrm{reg}}(\mc{M}\|\mc{N})+
    \mathbb{E}_{P_{\rho}^{G}}D_{\max}\!\left(\rho_{j}\middle\|\sigma_{j}\right).
\end{align}
\end{remark}

While there will exist states and measurements for which the distributed average is greater than the regularized entropy, that is not generally the case. The following provide classes of tuples $(\rho,\sigma,\mc{M},\mc{N},G)$ such that \begin{equation}
    \mathbb{E}_{P^G_{\rho}}D(\mc{M}(\rho_j)\|\mc{N}(\sigma_j))< D^{\mathrm{reg}}(\mc{M}\|\mc{N}).
\end{equation} In particular, we provide three cases where is is less: and academic example as well as the more general cases of commuting and pure states. 

\begin{example}
    Let $d=2$, $\rho= \ketbra{0}$, $\sigma = (1-\varepsilon)\ketbra{+}+\varepsilon\ketbra{-}$, $\mc{M}(x)=\Tr(x)\ketbra{-}$ and $\mc{N}=\mc{E}_\sigma$, where $\mc{E}_\sigma$ denotes the pinching map on the basis of $\sigma$ \cite{hayashi99}. Let $G=\set{\ketbra{0},\ketbra{1}}$ Then \begin{align}
        \mathbb{E}_pD\!\swich{\mc{M}(\rho_i)\|\mc{N}(\sigma_i)}&=D\!\swich{\mc{M}(\ketbra{0})\|\mc{N}((\sqrt{\sigma} \ketbra{0}\sqrt{\sigma})/\expval{\sigma}{0})}\\
        &=D(\ketbra{-}\| (1-\varepsilon)\ketbra{+}+\varepsilon\ketbra{-})=-\log \varepsilon.
    \end{align} Meanwhile \begin{align}
    D^{\mathrm{reg}}(\mc{M}\|\mc{N})&=\lim_{n\to\infty}\frac{1}{n}D(\mc{M}^{\otimes n}\|\mc{N}^{\otimes n})\geq\lim_{n\to\infty}\frac{1}{n}n\max_{\phi\in\S(\H)}D(\mc{M}(\phi)\|\mc{N}(\phi))\\
    &\geq D(\mc{M}(\ketbra{+})\|\mc{N}(\ketbra{+}))=D(\ketbra{-}\|\ketbra{+})=+\infty.
    \end{align} Regardless of $\varepsilon\in(0,1)$ we see that $\mathbb{E}_pD\!\swich{\mc{M}(\rho_i)\|\mc{N}(\sigma_i)}$ is finite, while $D^{\mathrm{reg}}(\mc{M}\|\mc{N})$ is not. 
\end{example}

\begin{example}[Commuting inputs]
The mismatch term in \cref{lemma:DmaxComparisonDreg} vanishes in the
commuting case. Let
\begin{align}
    \rho=
    \sum_{k}p_{k}\Pi_{k},\qquad\sigma=
    \sum_{k}q_{k}\Pi_{k}
\end{align}
be commuting states, and choose the common eigenbasis POVM
\begin{align}
    G=\{\Pi_{k}\}_{k}.
\end{align}
Assume $q_{k}>0$ whenever $p_{k}>0$. Then, for every branch with
$p_{k}>0$,
\begin{align}
    \rho_{k}=
    \frac{\sqrt{\rho}\Pi_{k}\sqrt{\rho}}{\Tr[\Pi_{k}\rho]}=\Pi_{k},
    \qquad
    \sigma_{k}=
    \frac{\sqrt{\sigma}\Pi_{k}\sqrt{\sigma}}{\Tr[\Pi_{k}\sigma]}=
    \Pi_{k}.
\end{align}
Therefore
\begin{align}
    D_{\max}\!\left(\mc{N}(\rho_{k})\middle\|\mc{N}(\sigma_{k})\right)=
    D_{\max}\!\left(\mc{N}(\Pi_{k})\middle\|\mc{N}(\Pi_{k})\right)=0.
\end{align}
Hence
\begin{align}
    \mathbb{E}_{P_{\rho}^{G}}
    D_{\max}\!\left(\mc{N}(\rho_{k})\middle\|\mc{N}(\sigma_{k})\right)=0.
\end{align}
By \cref{lemma:DmaxComparisonDreg}, we obtain
\begin{align}
    \mathbb{E}_{P_{\rho}^{G}}
    D\!\left(\mc{M}(\rho_{k})\middle\|\mc{N}(\sigma_{k})\right)\leq D^{\mathrm{reg}}(\mc{M}\|\mc{N}).
\end{align}
Thus, for commuting inputs and the common eigenbasis measurement, the
state-dependent term in~\cref{thm:dataProcess1} is no larger than the
regularized state-independent channel divergence.
\end{example}

The commuting case shows a gap between the bounds for commuting states with almost all pairs of channels. This gap and the continuity of the relative entropy (for non pure states) implies that this is also true from almost commuting states, that is pairs of states such that $\fmig\norm{[\rho,\sigma]}_1= \varepsilon\ll1$.

\begin{remark}[Pure states]
For pure states $\rho$ and $\sigma$ the inequality 
\begin{align}
D(\rho\|\sigma) -D(\mathcal{M}(\rho)\|\mathcal{N}(\sigma))\ge -\mathbb{E}_{P^G_{\rho}}D(\mc{M}(\rho_j)\|\mc{N}(\sigma_j))
\end{align}      
trivialises to $D(\rho\|\sigma)\ge 0$ for all $G$, as $\ketbra{\psi}_j=\ketbra{\psi}$ for all $\ket{\psi}$ and $G_j$.
\end{remark}

\section{Conditional Chain Rule via Entropy Bounds}\label{sec:resultsConditional}

In this section we establish a conditional chain rule (see~\cref{coro:chainRule1}). The result follows as a corollary of~\cref{thm:generalEntropyIneq}, which provides a general entropy inequality. 
By choosing suitable auxiliary states, we derive both the chain rule and~\cref{coro:2channelDPI}, thereby extending the framework of Sutter \emph{et al.}~\cite{sutter16}  to different channels.

Unlike in the setting of Sutter \emph{et al.} (see eq.~\cref{eq:universalBound}), our framework does not currently allow optimization over all possible measurements. We believe this reflects a technical limitation of our proof method rather than a fundamental restriction of the approach itself.
The central tool in our approach is a recovery map that generalizes the standard Petz construction. 

Whereas the Petz map is defined with respect to a single reference state, our formulation involves two distinct reference states. We define the resulting \emph{twisted recovery map}, for $\alpha = (1 + \ii t)/2$ as
\begin{align}
    \mc{R}^\alpha_{\gamma,\sigma,\mc{M}}(X)=J^\alpha_\sigma\circ\mc{M}^\dagger\circ J^{-\alpha}_{\gamma}(X)
\end{align}    
with $J_\sigma^\alpha(X) = \sigma^\alpha X \sigma^{\alpha^*}$. 

Analogously to the standard Petz construction, the pair $(\gamma,\sigma)$ plays the role of \emph{reference states} that fix
the normalization convention on the input and output side of the recovery.
For $\gamma=\mathcal{M}(\sigma)$  this reduces to the rotated Petz map $\mathcal R^{\alpha}_{\sigma,\mathcal{M}}$.

The map $\mathcal R^\alpha_{\gamma,\sigma,\mathcal M}$ is, like the Petz map, uniquely defined in the support of $\gamma$ and completely positive but not, in general, trace-non-increasing. 
This follows directly from the fact that the reference states $(\gamma,\sigma)$ are chosen independently: unless they satisfy $\gamma=\mathcal{M}(\sigma)$, the rescaling induced by $J_{\gamma}^{-\alpha}$
and the normalization by $J_{\sigma}^{\alpha}$ are mismatched, and the overall trace is dependent on the input, even in the support of $\gamma$.
The rotated Petz case, recovered for $(\gamma,\sigma)=(\mathcal{M}(\sigma),\sigma)$, is the unique configuration that restores trace preservation in the suport of $\gamma$. 

As in the standard Petz construction, we consider the rotated family $\mathcal R^{\alpha}_{\gamma,\sigma,\mathcal M}$, weighted by the real function $\beta_0(t) $, and define the averaged map
\begin{align}
\overline{\mathcal R}_{\gamma,\sigma,\mathcal M}
=\int \beta_0(t)\,\mathcal R^{\alpha}_{\gamma,\sigma,\mathcal M}\,dt,
\qquad 
    \beta_0(t) = \tfrac{\pi}{2}\bigl(\cosh(\pi t)+1\bigr)^{-1}.
\end{align}
Averaging preserves complete positivity but does not generally provide any information on the trace.
The averaged map is trace-non-increasing only when the effective rescaling operator obtained from the integral is bounded from above by the identity.
This condition holds automatically for the matched Petz reference states but requires additional, channel-dependent relations for arbitrary pairs $(\gamma,\sigma)$.

The situations in which the map becomes trace-non-increasing naturally identify classes of states and channels for which probabilistic single-shot recovery is achievable.
In this sense, the recovery map functions both as a mathematical device for proving entropy inequalities and as a diagnostic for detecting operationally meaningful recovery regimes.

Theorem \ref{thm:generalEntropyIneq} formalizes the general entropy inequality underlying this construction. Its proof adapts techniques from Kwon and Kim~\cite{kwon19,sutter16} and relies on operator inequalities such as the Peierls-Bogoliubov inequality.
 From there, we derive specialized forms by making informed choices for the auxiliary states, eventually leading to the chain rule as a corollary.


\begin{thm}\label{thm:generalEntropyIneq}

    Let $\mc{M}:\B(\H_A)\rightarrow\B(\H_B)$ be a quantum channel, let
    $\rho,\sigma \in \mathcal S(\H_A)$ be states on the input space, and let
    $\gamma,\omega \in \B(\H_B)$ be positive operators on the output space.
    Assume that
    \begin{align}
        \supp(\mc{M}(\rho))\subseteq\supp(\gamma).
    \end{align}
    Then
    \begin{align}
    D(\rho\|\sigma)-D(\mc{M}(\rho)\|\gamma)+D(\mc{M}(\rho)\|\omega)
    \geq
    D_\Pi(\rho\|\overline{\mc{R}}_{\gamma,\sigma,\mc{M}}(\omega)),
    \end{align}
    with $\Pi=\{\Pi_i\}$ the rank-one POVM defined by a diagonal basis of $\rho$ and
    \begin{align}
    \overline{\mc{R}}_{\gamma,\sigma,\mc{M}}=
    \int_{-\infty}^{+\infty}d\beta_0(t)\,
    J^\alpha_\sigma\circ\mc{M}^\dagger\circ J^{-\alpha}_{\gamma},
    \end{align}
    where $\beta_0(t)=\frac{\pi}{2}(\cosh(\pi t)+1)^{-1}$ and
    $\alpha=\frac{1+\ii t}{2}$, and the negative powers of $\gamma$ are understood in the Moore--Penrose sense.
\end{thm}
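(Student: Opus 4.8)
The plan is to adapt the now-standard variational/interpolation argument for strengthened data-processing inequalities (as in Sutter–Berta–Tomamichel and Kwon) to the two-channel situation, replacing the single reference state in the Petz map by the pair $(\gamma,\sigma)$. The left-hand side can be rewritten using $-D(\mc M(\rho)\|\gamma)+D(\mc M(\rho)\|\omega)=\Tr[\mc M(\rho)(\log\gamma-\log\omega)]=\Tr[\rho\,\mc M^\dagger(\log\gamma-\log\omega)]$, so that the whole left side becomes
\begin{align}
D(\rho\|\sigma)+\Tr\!\big[\rho\,\mc M^\dagger(\log\gamma-\log\omega)\big]
=\Tr[\rho\log\rho]-\Tr\!\big[\rho\big(\log\sigma-\mc M^\dagger(\log\gamma-\log\omega)\big)\big].
\end{align}
By the Peierls–Bogoliubov inequality, for any Hermitian $H$ one has $\Tr[\rho\log\rho]-\Tr[\rho H]\ge -\log\Tr\exp(\log\rho\ \text{\small shifted by}\ H)$; more precisely, using the Gibbs variational principle, $-\Tr[\rho(\log\sigma - \mc M^\dagger(\cdots))] + \Tr[\rho\log\rho]\ge -\log\Tr\exp\!\big(\log\sigma-\mc M^\dagger(\log\gamma-\log\omega)\big)$ would be the commutative shortcut; in the noncommutative case we instead keep the measured relative entropy on the right and use the fact that for a projective measurement $\Pi$ in the eigenbasis of $\rho$, $D_\Pi(\rho\|\eta)=\Tr[\rho\log\rho]-\Tr[\rho\log\Pi(\eta)]$, and $\Tr[\rho\log\Pi(\eta)]\le\Tr[\rho\log\eta]$ fails in general — so the correct route is the one Sutter–Berta–Tomamichel use: bound $D(\rho\|\sigma)-\Tr[\rho(\cdots)]$ from below by $-\log\Tr[\,\cdot\,]$ via Peierls–Bogoliubov, and then identify the argument of the exponential with $\log\overline{\mc R}_{\gamma,\sigma,\mc M}(\omega)$ up to the rotated-average correction.

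Concretely, the key steps I would carry out are: (i) reduce the statement, via Peierls–Bogoliubov applied with $\rho$ and the Hermitian operator $H=\mc M^\dagger(\log\omega-\log\gamma)+\log\sigma$ in a suitably symmetrized (Lie–Trotter) form, to showing $\log\Tr\exp(H)\le 0$ together with the identification $D_\Pi(\rho\|\cdot)\le D(\rho\|\sigma)-\ldots$; (ii) invoke the integral representation used by Junge \emph{et al.} and Kwon for the logarithm of a product $\sigma^{1/2}e^{X}\sigma^{1/2}$-type expression, namely the Hirschman/Stein interpolation identity that yields
\begin{align}
\log\!\big(\sigma^{1/2}\,\mc M^\dagger(\gamma^{-1/2}\omega\,\gamma^{-1/2})\,\sigma^{1/2}\big)
\ \text{dominates}\
\int_{-\infty}^{\infty}\!\beta_0(t)\,\log\!\big(J^\alpha_\sigma\circ\mc M^\dagger\circ J^{-\alpha}_\gamma(\omega)\big)\,dt
\end{align}
in the appropriate operator-monotone sense (this is exactly the content of the Junge–Renner–Sutter–Wilde–Winter averaging lemma, and it is here that the weight $\beta_0(t)=\frac{\pi}{2}(\cosh\pi t+1)^{-1}$ enters); (iii) combine (i) and (ii) with the operator Jensen / measured-entropy inequality $D(\rho\|\overline{\mc R}(\omega))\ge D_\Pi(\rho\|\overline{\mc R}(\omega))$ is the wrong direction, so instead use concavity of $\log$ and the variational characterization $D_\Pi(\rho\|\eta)=\max_\Pi[\Tr\rho\log\rho-\Tr\rho\log\Pi\eta]$ to push the $t$-integral outside via Jensen, landing on $D_\Pi(\rho\|\overline{\mc R}_{\gamma,\sigma,\mc M}(\omega))$.

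The point where I expect the real work to be is step (ii): verifying that the two-reference twisted map $J^\alpha_\sigma\circ\mc M^\dagger\circ J^{-\alpha}_\gamma$ plugs into the Junge–Sutter–Wilde–Winter interpolation machinery exactly as the rotated Petz map does, \emph{despite} $\overline{\mc R}_{\gamma,\sigma,\mc M}$ no longer being trace-preserving. One must check that the relevant bounded-analytic-function argument on the strip $0\le\mathrm{Re}\,z\le 1$ only uses positivity of $\sigma,\gamma$ and complete positivity of $\mc M^\dagger$ — not trace preservation — so that the convexity estimate giving the $\beta_0$-average still goes through; the loss of trace-preservation is precisely what is absorbed into (and only into) the term $-D(\mc M(\rho)\|\gamma)$ on the left-hand side. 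Once that is in place, additivity of $D$ under tensor powers and the asymptotic equipartition argument are not needed here (unlike in \cref{thm:dataProcess1}), since this bound is genuinely single-shot. The remaining steps — Peierls–Bogoliubov, Lie–Trotter to handle the noncommutativity of $\log\sigma$ and $\mc M^\dagger(\log\omega-\log\gamma)$, and Jensen to pull the integral through $\log$ — are routine given the cited tools.
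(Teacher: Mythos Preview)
Your high-level toolkit is the same as the paper's --- Jensen for the $t$-integral, the Sutter--Berta--Tomamichel multivariate trace inequality (their Corollary~3.3, which is where $\beta_0$ comes from), and Peierls--Bogoliubov --- but there is a concrete gap in how you propose to connect the entropic left-hand side to the twisted recovery map.

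The problem is your step~(i): taking $H=\mc M^\dagger(\log\omega-\log\gamma)+\log\sigma$ as a single Hermitian on $\H_A$ and then invoking Lie--Trotter does \emph{not} produce the operator $\sigma^{\alpha}\,\mc M^\dagger(\gamma^{-\alpha}\omega\,\gamma^{-\alpha^*})\,\sigma^{\alpha^*}$. For a general channel, $\exp\!\big(z\,\mc M^\dagger(\log\gamma)\big)\neq \mc M^\dagger(\gamma^{z})$, so no amount of Trotterizing $H$ will put $\mc M^\dagger$ \emph{between} the powers of $\gamma$ and $\sigma$ as required by the definition of $\mc R^{\alpha}_{\gamma,\sigma,\mc M}$. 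Relatedly, the inequality you write in step~(ii) --- an operator-monotone comparison of $\log(\sigma^{1/2}\mc M^\dagger(\cdots)\sigma^{1/2})$ with $\int\beta_0\log(\cdots)$ --- is not what the interpolation machinery actually gives; the Sutter--Berta--Tomamichel bound is a scalar inequality for $\log\|\prod_k e^{2\alpha H_k}\|_2$, not an operator inequality for $\log$ of a single operator. The paper closes this gap by first passing to a Stinespring dilation $\mc M(\cdot)=\Tr_E[U(\cdot\otimes\I_E)U^\dagger]$, so that $\mc M^\dagger$ becomes compression of a unitary conjugation. On the dilated space one sets $H_1=\tfrac12\log\Pi_i$, $H_2=-\tfrac12\log(p_i\Pi_i)$, $H_3=\tfrac12\log\sigma$, $H_4=-\tfrac12 U^\dagger(\log\gamma)U$, $H_5=\tfrac12 U^\dagger(\log\omega)U$; then $e^{2\alpha H_4}=U^\dagger\gamma^{-\alpha}U$ exactly, and the product $\prod_k e^{2\alpha H_k}$ collapses via $UU^\dagger=\I$ to $\Pi_i\,(p_i\Pi_i)^{-\alpha}\sigma^\alpha\mc M^\dagger(\gamma^{-\alpha}\omega\,\gamma^{-\alpha^*})\sigma^{\alpha^*}$ --- i.e.\ the twisted recovery map tested against the eigenprojectors of $\rho$. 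From here your remaining steps (Jensen on $\log\int d\beta_0$, Corollary~3.3, then Peierls--Bogoliubov with $R=\log\Pi_i$) go through, and averaging over $i$ with weights $p_i$ yields the claimed bound with $D_\Pi$. You were right that trace preservation plays no role in the interpolation step; the missing piece is the dilation, not anything about the analytic-function argument on the strip.
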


\begin{proof}

To avoid domain issues caused by non-full-rank operators, we use the following
regularization.  If $X\geq0$ is a nonzero positive semidefinite operator on a finite-dimensional Hilbert space $\mathcal K$, define
\begin{align}
    X_\varepsilon
    :=
    \frac{\Tr[X]}{\Tr[X+\varepsilon \I_{\mathcal K}]}
    \left(X+\varepsilon \I_{\mathcal K}\right),
    \qquad \varepsilon>0 .
\end{align}
Then $X_\varepsilon>0$, $\Tr[X_\varepsilon]=\Tr[X]$, and
$X_{\varepsilon} \to X $ in operator norm as $\varepsilon \downarrow 0$.Throughout, negative powers of positive
semidefinite operators are understood in the Moore--Penrose sense.

This regularization allows us to extend expressions involving non-full-rank $\sigma$ and $\omega$ directly by continuity. 

The case of a non-full-rank $\gamma$ requires a slightly different
regularization, because inverse powers are not continuous when an eigenvalue approaches zero. We therefore do not regularize $\gamma$ directly. Instead, we pass to its Moore--Penrose inverse.

Let $\xi$ denote the Moore--Penrose inverse of $\gamma$, namely
\begin{align}
    \xi=\gamma^{-1}
    \quad \text{on} \operatorname{supp}\gamma,
    \qquad
    \xi=0
    \quad \text{on }(\operatorname{kerr}\gamma).
\end{align}
Since $\xi_{\varepsilon}\to \xi$ in operator norm and only positive powers of $\xi_{\varepsilon}$ are involved, we have that 
\begin{align}
    J^{\alpha}_{\xi_{\varepsilon}}
    \xrightarrow{\varepsilon\to 0}
    J^{\alpha}_{\xi}
    =
    J^{-\alpha}_{\gamma}.
\end{align}
Let $\rho = \sum_{i} p_{i} \Pi_{i}$ be a rank-1 decomposition of $\rho$. For $i$ such that $p_{i}>0$. 
Consider the quantity $\Tr\cwich{p_{i}^{-1}\Pi_{i,\varepsilon}(J^\alpha_{\sigma_{\varepsilon}}\circ\mc{M}^\dagger\circ J^{\alpha}_{{\xi_{\varepsilon}}})({\omega_{\varepsilon}})}$. Let $U$ be the unitary that dilates the map $\mc{M}^{\dagger}(\cdot)=U^{\dagger}\cdot\otimes\I_{E} U$. For notational simplicity we omit the ancillary identity. Let $\alpha=\frac{1+\ii t}{2}$ then
\begin{align}
    \begin{split}
        \Tr&\cwich{p_{i}^{-1}\Pi_{i,\varepsilon}(J^\alpha_{\sigma_{\varepsilon}}\circ\mc{M}^\dagger\circ J^{\alpha}_{{\xi_{\varepsilon}}})({\omega_{\varepsilon}})}\\
    =&\Tr\cwich{\Pi_{i,\varepsilon} (p_{i}\I)^{-1}{\sigma_{\varepsilon}^\alpha} U^\dagger{\xi_{\varepsilon}^\alpha} UU^\dagger{\omega_{\varepsilon}} UU^\dagger{\xi_{\varepsilon}^{\alpha^*}}U{\sigma_{\varepsilon}^{\alpha^*}}}\\
    =&\Tr\cwich{e^{\log \Pi_{i,\varepsilon}}e^{-\log p_{i}\I}e^{\alpha\log{\sigma_{\varepsilon}}}e^{\alpha U^\dagger\log{\xi_{\varepsilon}} U}e^{U^\dagger\log{\omega_{\varepsilon}} U}e^{\alpha^* U^\dagger\log{\xi_{\varepsilon}} U}e^{\alpha^*\log{\sigma_{\varepsilon}}}}\\
    =&\norm{e^{\alpha\log \Pi_{i,\varepsilon}}e^{-\alpha\log p_{i}\I}e^{\alpha\log{\sigma_{\varepsilon}}}e^{\alpha U^\dagger\log{\xi_{\varepsilon}} U}e^{\alpha U^\dagger\log{\omega_{\varepsilon}} U}}_2^2\\
    =&\norm{\prod_{k=1}^5e^{2\alpha H_k}}_2^2.
    \end{split}
\end{align}
 With $H_1=\fmig\log \Pi_{i,\varepsilon}$, $H_2=-\fmig\log p_{i} \I$, $H_3=\fmig\log{\sigma_{\varepsilon}}$, $H_4=\fmig U^\dagger\log{\xi_{\varepsilon}} U$ and $H_5=\fmig U^\dagger\log{\omega_{\varepsilon}} U$.
 
 We now integrate over $t$ with respect to the weight $d\beta_0(t)$, apply the logarithm on both sides of the equality and average over the spectrum of $\rho$. The left hand side, in the limit $\varepsilon\to 0$, becomes
\begin{align}
\begin{split}
&\lim_{\varepsilon \to 0}\sum_{i} p_{i} \log \int_{-\infty}^{\infty}d\beta_0(t)\Tr\cwich{p_{i}^{-1}\Pi_{i,\varepsilon}(J^\alpha_{\sigma_{\varepsilon}}\circ\mc{M}^\dagger\circ J^{\alpha}_{{\xi_{\varepsilon}}})({\omega_{\varepsilon}})}\\ 
&=\sum_{i} p_{i} \log \int_{-\infty}^{\infty}d\beta_0(t)\Tr\cwich{p_{i}^{-1}\Pi_{i}(J^\alpha_{\sigma}\circ\mc{M}^\dagger\circ J^{-\alpha}_{{\gamma}})({\omega})}\\
&=\sum_{i} p_{i} \log \Tr\cwich{p_{i}^{-1}\Pi_{i}\overline{\mc{R}}_{{\gamma},{\sigma},\mc{M}}({\omega})} 
=\sum_{i} p_{i} \swich{\log \Tr\cwich{\Pi_{i}\overline{\mc{R}}_{{\gamma},{\sigma},\mc{M}}({\omega})} - \log p_i}\\
 &= D_\Pi(\rho\| \overline{\mc{R}}_{\gamma,\sigma,\mc{M}}(\omega)),  
\end{split}
\end{align} 
where $D_\Pi$ denotes the relative entropy between the distributions obtained on measuring with the eigenbasis of $\rho$.

We now turn to the right hand side of the equality. By concavity of the logarithm, the inequality from~\cite[Corollary 3.3]{sutter16} and the Peierls-Bogoliubov inequality~\cite{araki75,kwon19} that for self-adjoint operators $F,R$ with $\Tr[e^R]=1$ is $\Tr\cwich{e^{F+R}}\geq e^{\Tr\cwich{Fe^R}}$. We apply this to $F=-\log p_i\I+\log{\sigma_{\varepsilon}}+U^\dagger\log{\xi_{\varepsilon}} U+U^\dagger \log{\omega_{\varepsilon}} U$, which is clearly self-adjoint, and $R=\log \Pi_{i,\varepsilon}$, so that $e^R=\Pi_{i,\varepsilon}$ has trace 1.
Therefore
\begin{align}
    \begin{split}
 \sum_i p_i&\log\int_{-\infty}^{\infty}d\beta_0(t)\norm{\prod_{k=1}^5e^{2\alpha H_k}}_2^2  
 \ge  2 \sum_{i} p_{i} \int_{-\infty}^{\infty}d\beta_0(t)\log\norm{\prod_{k=1}^5e^{2\alpha H_k}}_2 \\
\geq&2\sum_{i} p_{i}\log\norm{\exp(\sum_{k=1}^5 H_k)}_2\\
 \geq&\sum_{i} p_{i}\Tr\cwich{e^{\log\Pi_{i,\varepsilon}}\swich{-\log\I p_{i}+\log{\sigma_{\varepsilon}}+U^\dagger\log{\xi_{\varepsilon}} U+U^\dagger \log{\omega_{\varepsilon}} U}} \\
 =&\sum_{i} p_{i}\Tr\cwich{\Pi_{i,\varepsilon}\swich{-\log\I p_{i}+\log{\sigma_{\varepsilon}}+U^\dagger\log{\xi_{\varepsilon}} U+U^\dagger \log\mc{M}(\rho) U-U^\dagger \log\mc{M}(\rho) U+U^\dagger \log{\omega_{\varepsilon}} U}} \\
 =&\sum_{i} p_{i}\Tr\cwich{\Pi_{i,\varepsilon}\swich{-\log\I p_{i}}}\\
 +&\Tr\cwich{\sum_{i} p_{i}\Pi_{i,\varepsilon}\swich{\log{\sigma_{\varepsilon}}+U^\dagger\log{\xi_{\varepsilon}} U+U^\dagger \log\sum_jp_j\mc{M}(\Pi_{j,\varepsilon}) U-U^\dagger \log\sum_jp_j\mc{M}(\Pi_{j,\varepsilon}) U+U^\dagger \log{\omega_{\varepsilon}} U}},
\end{split}
\end{align}
 where the $\sum_jp_j\mc{M}(\Pi_{j,\varepsilon})$ is taken only for $j$ such that $p_j>0$. 
 In order, the first inequality is Jensen's inequality applied to the logarithm, the second inequality uses~\cite[Cor.~3.3]{sutter16}, and the third step applies Peierls-Bogoliubov.

From here we divide this sum into 3 parts to take the limit. First taking the first two terms:

\begin{equation}
    \begin{split}
       \lim_{\varepsilon\to 0} &\sum_{i} p_{i}\swich{-\Tr\cwich{\Pi_{i,\varepsilon}\swich{\log\I p_{i}}} +\Tr\cwich{\Pi_{i,\varepsilon}\log\sigma_\varepsilon}} = \lim_{\varepsilon\to 0}-\sum_i p_i\log p_i +\Tr\cwich{\sum_i p_i\Pi_{i,\varepsilon}\log\sigma_\varepsilon} \\
       &= \lim_{\varepsilon\to 0}-\Tr\cwich{\swich{\sum_{i}p_i\Pi_{i,\varepsilon}}\log \swich{\sum_ip_i\Pi_{i,\varepsilon}}}+\Tr\cwich{\sum_i p_i\Pi_{i,\varepsilon}\log\sigma_\varepsilon}\\
       &=\lim_{\varepsilon\to 0} -D\!\swich{\sum_i p_i\Pi_{i,\varepsilon}\,\middle\|\,\sigma_\varepsilon}\geq -D(\rho\|\sigma),
    \end{split}
\end{equation} where we used that $\Tr\cwich{\Pi_{i,\varepsilon}}=1$, the continuity of the von Neumann entropy and the lower semicontinuity of the relative entropy. Second, we take the third and fourth terms: \begin{equation}
    \begin{split}
    \lim_{\varepsilon\to 0}&\Tr\cwich{\sum_{i} p_{i}\Pi_{i,\varepsilon}\swich{U^\dagger\log{\xi_{\varepsilon}} U+U^\dagger \log\sum_jp_j\mc{M}(\Pi_{j,\varepsilon})U}}\\
    &=\lim_{\varepsilon\to 0}\Tr\cwich{\sum_{i} p_{i}U\Pi_{i,\varepsilon}U^\dagger\swich{\log{\xi_{\varepsilon}} + \log\sum_jp_j\mc{M}(\Pi_{j,\varepsilon})}}\\
    &=\Tr\cwich{\sum_{i} p_{i}U\Pi_{i}U^\dagger\swich{\log{\xi} + \log\mc{M}(\rho) }}
    =\Tr\cwich{\sum_{i} p_{i}U\Pi_{i}U^\dagger\swich{-\log{\gamma} + \log\mc{M}(\rho) }}\\
    &=D(\mc{M}(\rho)\|\gamma),
    \end{split}
\end{equation} where the limit follows from the support condition on $\mc{M}(\rho)$ and $\gamma$, which extends immediately to the $\mc{M}(\Pi_i)$ such that $p_i>0$ and $\xi$, since $\supp \mc{M}(\Pi_i)\subseteq\supp\mc{M}(\rho)$ and $\supp\xi=\supp\gamma$. Finally, we evaluate the fifth and sixth terms: \begin{equation}
    \begin{split}
    \lim_{\varepsilon\to 0}&\Tr\cwich{\sum_{i} p_{i}\Pi_{i,\varepsilon}\swich{-U^\dagger \log\sum_jp_j\mc{M}(\Pi_{j,\varepsilon}) U+U^\dagger \log{\omega_{\varepsilon}} U}}\\
    & =\lim_{\varepsilon\to 0}\Tr\cwich{\sum_{i} p_{i}U\Pi_{i,\varepsilon}U^\dagger\swich{- \log\sum_jp_j\mc{M}(\Pi_{j,\varepsilon}) + \log{\omega_{\varepsilon}} }}\\
    &=\lim_{\varepsilon\to 0} -D\!\swich{\sum_i p_i\mc{M}(\Pi_{i,\varepsilon})\,\middle\|\,\omega_\varepsilon}\geq -D(\mc{M}(\rho)\|\omega),
    \end{split}
\end{equation} 

where we again used the lower semicontinuity of the relative entropy in the limit. 

All together we obtain that the right hand side is larger than

\begin{equation}
-D(\rho\|\sigma)
+
D(\mc{M}(\rho)\|\gamma)
-
D(\mc{M}(\rho)\|\omega).
\end{equation} Joining the left and right hand sides, we obtain the final result (after multiplying both sides by $-1$): \begin{align}
\begin{split}
-D_\Pi\!\left(\rho\middle\|
\overline{\mc{R}}_{\gamma,\sigma,\mc{M}}(\omega)\right)
\geq
-D(\rho\|\sigma)
+
D(\mc{M}(\rho)\|\gamma)
-
D(\mc{M}(\rho)\|\omega).
\end{split}
\end{align}

\end{proof}

\cref{thm:generalEntropyIneq} provides an entropy inequality between arbitrary states. In its most general form the inequality contains too many degrees of freedom, making it difficult to draw direct conclusions. 
 To extract more meaningful statements, we now specialize to particular selections of the auxiliary states $\gamma$ and $\omega$.
 The remainder of the section explores these cases and conclude in the chain rule states in~\cref{coro:chainRule1}.


The following corollary is a weaker version of the result from~\cite{sutter16} (see \cref{eq:strineq} in the introduction) since our result does not involve optimization over measurements. Differently from~\cite{sutter16}, we cannot rely on the variational characterization of the relative entropy, i.e. \begin{equation}
    D(\rho \Vert \sigma)
    = \sup_{H = H^*} \Tr\cwich{ \rho(H) - \log \sigma(e^{H}) },
\end{equation}
which enabled such an optimization in their proof. However, we believe that this restriction arises from technical limitations of the present argument rather than from a fundamental obstruction.
\begin{coro}\label{coro:2channelDPI}
    Let $\mc{M},\mc{N}$ be quantum channels and $\rho$, $\sigma$ states such that $\supp\mc{M}(\rho)\subseteq\supp\mc{N}(\sigma)$. Then \begin{align}
        D(\rho\|\sigma)-D(\mc{M}(\rho)\|\mc{N}(\sigma))\geq D_\Pi(\rho\|\mc{R}_{\mc{N}(\sigma),\sigma,\mc{M}}(\mc{M}(\rho))).
    \end{align} 
\end{coro}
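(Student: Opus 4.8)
The plan is to obtain \cref{coro:2channelDPI} as an immediate specialization of \cref{thm:generalEntropyIneq} by choosing the two free positive operators $\gamma,\omega$ on $\H_B$ appropriately. The natural choice is $\gamma=\mc{N}(\sigma)$ and $\omega=\mc{M}(\rho)$. First I would substitute these into the inequality of \cref{thm:generalEntropyIneq}, namely
\begin{align*}
D(\rho\|\sigma)-D(\mc{M}(\rho)\|\gamma)+D(\mc{M}(\rho)\|\omega)\geq D_\Pi(\rho\|\overline{\mc{R}}_{\gamma,\sigma,\mc{M}}(\omega)).
\end{align*}
With $\gamma=\mc{N}(\sigma)$ the second term becomes $D(\mc{M}(\rho)\|\mc{N}(\sigma))$, and with $\omega=\mc{M}(\rho)$ the third term becomes $D(\mc{M}(\rho)\|\mc{M}(\rho))=0$, so the left-hand side collapses to exactly $D(\rho\|\sigma)-D(\mc{M}(\rho)\|\mc{N}(\sigma))$. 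On the right-hand side the averaged twisted recovery map becomes $\overline{\mc{R}}_{\mc{N}(\sigma),\sigma,\mc{M}}$ evaluated at $\omega=\mc{M}(\rho)$, which is precisely the object denoted $\mc{R}_{\mc{N}(\sigma),\sigma,\mc{M}}(\mc{M}(\rho))$ in the corollary statement. This yields the claimed inequality directly.

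The two points that need a sentence of justification are that the substitutions are legal and that the notation matches. For legality, $\mc{M},\mc{N}$ are CPTP and $\rho,\sigma$ are states, so $\mc{N}(\sigma)$ and $\mc{M}(\rho)$ are indeed positive operators on $\H_B$ (in fact states), which is all \cref{thm:generalEntropyIneq} requires of $\gamma,\omega$; the support condition needed for the twisted recovery map to be well defined is $\supp(\mc{M}(\rho))\subseteq\supp(\mc{N}(\sigma))$, which is exactly the condition under which $D(\mc{M}(\rho)\|\mc{N}(\sigma))<\infty$ — and if that fails the inequality is trivial since the left-hand side is $-\infty\cdot(\text{something})$ only if $D(\rho\|\sigma)$ is also infinite, a case I would dispatch separately by noting both sides behave consistently (or simply restrict to the finite case, as is standard). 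For the notation, I would remark that in \cref{thm:generalEntropyIneq} the symbol $\overline{\mc{R}}_{\gamma,\sigma,\mc{M}}$ already denotes the $\beta_0$-averaged family $\int d\beta_0(t)\,J^\alpha_\sigma\circ\mc{M}^\dagger\circ J^{-\alpha}_\gamma$, which is what \cref{coro:2channelDPI} writes simply as $\mc{R}_{\mc{N}(\sigma),\sigma,\mc{M}}$; a half-line clarifying this identification suffices.

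There is essentially no obstacle here: \cref{coro:2channelDPI} is a pure corollary obtained by plugging in $\gamma=\mc{N}(\sigma)$, $\omega=\mc{M}(\rho)$ and observing the vanishing of one term. The only thing to be careful about is the degenerate-support edge case mentioned above, and the bookkeeping that $D_\Pi$ is the relative entropy measured in the eigenbasis of $\rho$ — so the right-hand side is genuinely a measured relative entropy, which is why this is weaker than the optimized bound \cref{eq:strineq} of Sutter \emph{et al.} I would present the proof in two or three lines, and perhaps append a remark that setting $\mc{M}=\mc{N}$ recovers (a measured-basis, non-optimized version of) the strengthened DPI with the Petz map, consistent with the fact that for $\gamma=\mc{M}(\sigma)$ the twisted recovery map reduces to the rotated Petz map as noted in the text preceding \cref{thm:generalEntropyIneq}.
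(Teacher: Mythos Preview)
Your proposal is correct and matches the paper's approach exactly: the paper's proof is the single line ``immediate from \cref{thm:generalEntropyIneq} by letting $\gamma=\mc{N}(\sigma)$,'' and your additional explicit choice $\omega=\mc{M}(\rho)$ (which the paper leaves implicit) is precisely what makes the third term vanish and places $\mc{M}(\rho)$ inside the recovery map on the right. Your extra remarks on support conditions and on the identification $\mc{R}_{\mc{N}(\sigma),\sigma,\mc{M}}=\overline{\mc{R}}_{\mc{N}(\sigma),\sigma,\mc{M}}$ are correct and more careful than the paper itself.
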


\begin{proof}
    The result is immediate from~\cref{thm:generalEntropyIneq} by letting $\gamma=\mc{N}(\sigma)$ and $\omega=\mc{M}(\rho)$.
\end{proof}

\begin{thm}\label{thm:generalREI}
    Let $\mc{M}$ be a quantum channel, $\rho$, $\sigma$, $\gamma$ states, $\omega$ a positive semidefinite operator, and $D_{\Pi}$ the relative entropy between the distributions obtained when measuring on the basis of $\rho$. Let $\supp \mc{M}(\rho)\subseteq \supp\gamma$. Then
    \begin{align}
    D(\rho\|\sigma)-D(\mc{M}(\rho)\|\gamma)\geq -D(\mc{M}(\rho)\|\omega),
    \end{align}
    if $\Tr\cwich{\Pi_\rho\overline{\mc{R}}_{\gamma,\sigma,\mc{M}}(\omega)} \le 1$.
\end{thm}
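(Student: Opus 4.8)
The plan is to derive \cref{thm:generalREI} directly from \cref{thm:generalEntropyIneq} by discarding the measured-entropy term on the right-hand side, using the extra hypothesis to control its sign. Recall that \cref{thm:generalEntropyIneq} gives
\begin{align*}
D(\rho\|\sigma)-D(\mc{M}(\rho)\|\gamma)+D(\mc{M}(\rho)\|\omega)\;\geq\; D_\Pi\bigl(\rho\,\big\|\,\overline{\mc{R}}_{\gamma,\sigma,\mc{M}}(\omega)\bigr),
\end{align*}
so it suffices to show that the assumption $\Tr[\Pi_\rho\,\overline{\mc{R}}_{\gamma,\sigma,\mc{M}}(\omega)]\le 1$ forces $D_\Pi(\rho\|\overline{\mc{R}}_{\gamma,\sigma,\mc{M}}(\omega))\ge 0$; rearranging then yields the claim. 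Here $\Pi_\rho$ denotes the projector onto $\supp(\rho)$, equivalently $\sum_i\Pi_i$ over the rank-one eigenprojectors of $\rho$.

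The key step is the nonnegativity of measured relative entropy against a possibly subnormalized positive operator. Write $\tau=\overline{\mc{R}}_{\gamma,\sigma,\mc{M}}(\omega)$, which is positive semidefinite since averaging rotated Petz-type maps preserves complete positivity (as noted in the text preceding \cref{thm:generalEntropyIneq}). With $\rho=\sum_i p_i\Pi_i$ one has $D_\Pi(\rho\|\tau)=\sum_i p_i\bigl(\log p_i-\log\Tr[\Pi_i\tau]\bigr)=D(p\|r)-\sum_i p_i\log\bigl(\sum_j r_j\bigr)+\big(\text{correction}\big)$ — more cleanly, set $r_i=\Tr[\Pi_i\tau]\ge 0$ and $R=\sum_i r_i=\Tr[\Pi_\rho\tau]\le 1$. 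Then
\begin{align*}
D_\Pi(\rho\|\tau)=\sum_i p_i\log\frac{p_i}{r_i}=\sum_i p_i\log\frac{p_i}{r_i/R}-\log R\;=\;D\!\left(p\,\middle\|\,\tfrac{r}{R}\right)-\log R\;\ge\;0,
\end{align*}
because $r/R$ is a genuine probability distribution so the classical relative entropy is nonnegative, and $-\log R\ge 0$ since $R\le 1$. (One should note in passing that if $\supp(\rho)\not\subseteq\supp(\tau)$ then some $r_i=0$ with $p_i>0$ and $D_\Pi=+\infty\ge 0$ trivially, so the bound holds in all cases.)

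Combining, $D(\rho\|\sigma)-D(\mc{M}(\rho)\|\gamma)+D(\mc{M}(\rho)\|\omega)\ge D_\Pi(\rho\|\tau)\ge 0$, i.e. $D(\rho\|\sigma)-D(\mc{M}(\rho)\|\gamma)\ge -D(\mc{M}(\rho)\|\omega)$, which is exactly the assertion. I do not expect a serious obstacle here: the only mild subtlety is bookkeeping the support condition (ensuring $D(\mc{M}(\rho)\|\gamma)$ and $D(\mc{M}(\rho)\|\omega)$ are finite, or else the inequality is vacuous/automatic) and confirming that the trace condition is stated against the correct projector $\Pi_\rho=\sum_i\Pi_i$ appearing implicitly in $D_\Pi$; both are routine. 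The substantive content was already carried by \cref{thm:generalEntropyIneq}, and this corollary is simply the observation that the measured-entropy remainder is sign-definite precisely under the stated trace bound.
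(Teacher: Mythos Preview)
Your argument is correct and is essentially the paper's own proof: both invoke \cref{thm:generalEntropyIneq} and then show $D_\Pi(\rho\|\tau)\ge 0$ by normalizing $\tau$ within $\supp(\rho)$ by $R=\Tr[\Pi_\rho\tau]$, splitting off a nonnegative $-\log R$ and a nonnegative (measured/classical) relative entropy between normalized distributions. The only cosmetic difference is that you write the computation at the level of the classical vectors $p,r$ rather than as $D_\Pi(\rho\|\tau/R)-\log R$; the content is identical.
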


\begin{proof}
Consider the setting of~\cref{thm:generalEntropyIneq}. We need to show that \begin{align} 
        T=\Tr\cwich{\Pi_\rho\overline{\mc{R}}_{\gamma,\sigma,\mc{M}}(\omega)}\leq 1 \quad\Rightarrow \quad D_\Pi(\rho\|\overline{\mc{R}}_{\gamma,\sigma,\mc{M}}(\omega))\geq 0.
    \end{align} We can calculate it directly: \begin{align}
        \begin{split}
        D_\Pi(\rho\|\overline{\mc{R}}_{\gamma,\sigma,\mc{M}}(\omega))
        &=D_\Pi\swich{\rho\middle\|T\frac{\overline{\mc{R}}_{\gamma,\sigma,\mc{M}}(\omega)}{T}}\\
        &=D_\Pi\swich{\rho\middle\|\frac{\overline{\mc{R}}_{\gamma,\sigma,\mc{M}}(\omega)}{T}}-\log T\geq 0.
        \end{split}
    \end{align} The first summand is positive because it is the Kullback-Leibler divergence between two probability distributions ($\overline{\mc{R}}_{\gamma,\sigma,\mc{M}}(\omega)/T$ might not be normalised since the trace is in the support of $\rho$, but is in normalised within the support, which is what matters for the entropy of the measured distribution in this basis) and the second one is by the hypothesis. 
\end{proof}

\begin{remark}
    As previously mentioned, the map $\overline{\mc{R}}_{\gamma,\sigma,\mc{M}}$ is completely positive, but not trace preserving (or trace non-increasing for inputs outside the support of $\gamma$) in general. The condition in~\cref{thm:generalREI} reduces to asking whether $\overline{\mc{R}}_{\gamma,\sigma,\mc{M}}$ is trace non-increasing for $\omega$ (for relevant $\omega$, which will be trace 1) within the support of $\rho$. 
    
    For $\gamma=\mc{M}(\sigma)$ we recover the universal recovery channel of \cite{junge18}, which guarantees trace preservation, but we do not recover the full result from \cite{sutter16} due to the specific measurement $\Pi$ in \cref{thm:generalREI}.
\end{remark}

\begin{coro}\label{coro:chainRule1}
    Let $\mc{M}, \mc{N}$ be quantum channels and let $\rho=\sum_j p_j\Pi_j, \sigma$ be states. Then \begin{align}\label{eq:chainRule1}
        D(\rho\|\sigma)-D(\mc{M}(\rho)\|\mc{N}(\sigma))\geq -\mathbb{E}_p D(\mc{M}(\Pi_j)\|\mc{N}(\Pi_j))
    \end{align}
    if $\Tr\cwich{\Pi_\rho\overline{\mc{R}}_{\mc{N}(\sigma),\sigma,\mc{M}}(\mc{N}(\rho))}\leq 1$.
\end{coro}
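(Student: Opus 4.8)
The plan is to derive \cref{coro:chainRule1} as a direct specialization of \cref{thm:generalREI}, exactly in the way \cref{coro:2channelDPI} is obtained from \cref{thm:generalEntropyIneq}. First I would set $\gamma = \mc{N}(\sigma)$ and $\omega = \mc{N}(\rho)$ in \cref{thm:generalEntropyIneq} (equivalently, start from \cref{coro:2channelDPI} and specialize the auxiliary output state). With $\gamma = \mc{N}(\sigma)$ the term $D(\mc{M}(\rho)\|\gamma)$ becomes $D(\mc{M}(\rho)\|\mc{N}(\sigma))$, which is the quantity we want on the left-hand side. The hypothesis $\Tr[\Pi_\rho \overline{\mc{R}}_{\mc{N}(\sigma),\sigma,\mc{M}}(\mc{N}(\rho))] \le 1$ is precisely the condition required to invoke \cref{thm:generalREI} with this choice of $\gamma$ and $\omega$, so under that hypothesis we obtain
\begin{align}
    D(\rho\|\sigma) - D(\mc{M}(\rho)\|\mc{N}(\sigma)) \;\geq\; -D(\mc{M}(\rho)\|\mc{N}(\rho)).
\end{align}

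It remains to bound $D(\mc{M}(\rho)\|\mc{N}(\rho))$ from above by $\mathbb{E}_p D(\mc{M}(\Pi_j)\|\mc{N}(\Pi_j))$, which gives $-D(\mc{M}(\rho)\|\mc{N}(\rho)) \ge -\mathbb{E}_p D(\mc{M}(\Pi_j)\|\mc{N}(\Pi_j))$ and hence the claim. This is exactly an instance of joint convexity of the relative entropy (equivalently \cref{coro:ensembles} with $\{\tau_i\} = \{\mc{M}(\Pi_i)\}$, $\{\mu_i\} = \{\mc{N}(\Pi_i)\}$, $p = q$, using $\rho = \sum_j p_j \Pi_j$ so that $\mc{M}(\rho) = \sum_j p_j \mc{M}(\Pi_j)$ and $\mc{N}(\rho) = \sum_j p_j \mc{N}(\Pi_j)$): joint convexity yields
\begin{align}
    D\!\left(\sum_j p_j \mc{M}(\Pi_j) \,\middle\|\, \sum_j p_j \mc{N}(\Pi_j)\right) \;\leq\; \sum_j p_j\, D(\mc{M}(\Pi_j)\|\mc{N}(\Pi_j)) = \mathbb{E}_p D(\mc{M}(\Pi_j)\|\mc{N}(\Pi_j)).
\end{align}
Combining the two displays finishes the argument.

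I do not expect any serious obstacle here: the whole content of the corollary is packaged into \cref{thm:generalREI} and its trace hypothesis, and the only additional ingredient is the standard joint convexity of relative entropy (already used and re-derived in \cref{coro:jointConvexity}). The one point to be careful about is the normalization subtlety already flagged in the proof of \cref{thm:generalREI} — namely that $\overline{\mc{R}}_{\mc{N}(\sigma),\sigma,\mc{M}}(\mc{N}(\rho))$ need not be a normalized state, so the inequality $D_\Pi(\rho\|\overline{\mc{R}}_{\mc{N}(\sigma),\sigma,\mc{M}}(\mc{N}(\rho))) \ge 0$ genuinely requires the trace condition; but since that condition is imposed as a hypothesis of the corollary, nothing further is needed. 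One should also note, as in \cref{coro:2channelDPI}, that the measured-entropy term $D_\Pi(\rho\|\cdot)$ in \cref{thm:generalEntropyIneq} is simply discarded after applying \cref{thm:generalREI}, which is why the final bound is stated in terms of ordinary relative entropy rather than a measured one.
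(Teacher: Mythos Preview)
Your proposal is correct and follows essentially the same route as the paper: specialize \cref{thm:generalREI} with $\gamma=\mc{N}(\sigma)$ and $\omega=\mc{N}(\rho)$ to obtain $D(\rho\|\sigma)-D(\mc{M}(\rho)\|\mc{N}(\sigma))\geq -D(\mc{M}(\rho)\|\mc{N}(\rho))$, and then apply joint convexity of relative entropy to the decomposition $\rho=\sum_j p_j\Pi_j$.
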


\begin{proof}
    First we consider the case when $\supp\mc{M}(\rho)\subseteq\supp\mc{N}(\sigma)$. Consider~\cref{thm:generalREI} and let $\omega=\mc{N}(\rho)$ and $\gamma=\mc{N}(\sigma)$. We obtain \begin{align}
        \begin{split}
        D(\rho\|\sigma)&-D(\mc{M}(\rho)\|\mc{N}(\sigma))\geq -D(\mc{M}(\rho)\|\mc{N}(\rho))\\
        &=-D\swich{\mc{M}\swich{\sum_jp_j \Pi_j}\middle\|\mc{N}\swich{\sum_jp_j \Pi_j}}.
        \end{split}
    \end{align}

     We can now use the joint convexity of the quantum relative entropy to find the result: \begin{align}
        \begin{split}
            D(\rho\|\sigma)-D(\mc{M}(\rho)\|\mc{N}(\sigma) )&\geq-D\swich{\mc{M}\swich{\sum_jp_j \Pi_j}\middle\|\mc{N}\swich{\sum_jp_j \Pi_j}}\\
         &\geq -\sum_jp_j D(\mc{M}(\Pi_j)\|\mc{N}(\Pi_j)).
        \end{split}
    \end{align}
          
    Finally let $\supp\mc{M}(\rho)\not\subseteq\supp\mc{N}(\sigma)$. By definition $D(\mc{M}(\rho)\|\mc{N}(\sigma))=+\infty$. We need to show the either right hand side or $D(\rho\|\sigma)$ are infinite. $D(\rho\|\sigma)=+\infty$ if $\supp\rho\not\subseteq\supp\sigma$. Let $\supp\rho\subseteq\supp\sigma$. As $\supp\mc{M}(\rho)\not\subseteq\supp\mc{N}(\sigma)$ by hypothesis, there exists $\Pi_i$ with $p_i>0$ such that $\supp\mc{M}(\Pi_i)\not\subseteq\supp\mc{N}(\sigma)$ due to the linearity of quantum channels. This, together with $\supp\Pi_i\subseteq\supp\rho\subseteq\supp\sigma$, implies that $\supp\mc{M}(\Pi_i)\not\subseteq\supp\mc{N}(\Pi_i)\subseteq\supp\mc{N}(\sigma)$. Because we could choose $p_i$ to be strictly positive, we have $p_iD(\mc{M}(\Pi_i)\|\mc{N}(\Pi_i))=+\infty$, concluding the proof.
\end{proof}

\section{Discussion}\label{sec:discussion}

This work establishes a single-letter chain inequality for quantum relative entropy, \cref{thm:dataProcess1}. In the main text, this result is obtained by lifting the problem to c-q states built from POVM-induced ensemble partitions, applying data processing for the partial trace, and then using measurement monotonicity. Additionally, in \cref{thm:generalEntropyIneq}, we introduce a general entropy inequality from a twisted recovery map in terms of a measured divergence $D_\Pi$. As corollaries, we recover both a conditioned chain rule in which the right hand side maps act on the eigenbasis elements of the initial state $\rho$ (\cref{coro:chainRule1}) and a two-channel strengthened DPI (\cref{coro:2channelDPI}), connecting our results to those ones in~\cite{junge18,sutter16}.

Strengthened data-processing inequalities are expressed in terms of the measured relative entropy with respect to a universal rotated-Petz mixture that exactly recovers the reference state~\cite{junge18,sutter16}. Our twisted recovery  map, by contrast, depends on two reference states $(\gamma,\sigma)$ and is completely positive but not generally trace-non-increasing. 

This construction refines the strengthened data-processing framework by incorporating the recoverability mechanism into a single-letter chain rule, linking entropy loss under a channel to the reconstruction of the input state.

However, unlike~\cite{sutter16}, our proof technique does not optimize over POVMs; thus $D_\Pi$ remains basis-tied to the measurement basis of $\rho$. However we believe that restriction stems from the structure of the proof rather than from any fundamental obstruction of the approach.

Fang, Fawzi, Renner, and Sutter~\cite{fang20} proved that a genuine quantum chain rule holds in infinitely many-copy regime via the regularized channel divergence $D^{\mathrm{reg}}(\mathcal M\|\mathcal N)$. Our results are complementary: they provide informative single-letter bounds that reduce to DPI under commutation assumptions (\cref{example:commutingInput}) and  recover joint convexity (\cref{coro:jointConvexity}), but cannot, in full generality, replace the regularized term.

The POVM-induced partitions $(\rho_j,\sigma_j)$ in \cref{thm:dataProcess1} can be viewed as a link between classical Bayesian updating and quantum conditioning~\cite{fuchs01,leifer13,parzygnat23}. 
From this perspective~\cref{thm:dataProcess1} decomposes the global changes in distinguishability caused by the channels$(\mathcal M,\mathcal N)$ into an average over conditional branches processed by the same channels.

In the commuting case, the bound becomes effectively basis-free on the right-hand side (\cref{example:commutingInput}), reproducing the structural role of point-mass divergences in the classical chain rule.

The chain inequality in \cref{coro:difBasis} depends on the pairing of eigenprojectors; optimizing over permutations would strengthen the bound (Remark~\ref{rem:ordering}). For the conditional chain rule (\cref{coro:chainRule1}), the sufficient trace condition $\Tr[\Pi_\rho\,\mathcal R_{\gamma,\sigma,\mathcal M}(\omega)]\le1$ guarantees a nonnegative $D_\Pi$ term; Appendix~\ref{app:counterexamples} shows that some condition of this form is \emph{necessary} in general, since unconditional formulations fail on explicit families.

Single-letter chain bounds are useful in regimes where reuglarization is infeasible or unnecessary. 
They capture information-theoretic behavior in one-shot or few-shot scenarios—such as hypothesis testing, or channel certification or finite-size security estimates in quantum key distribution. They also offer structural insight in resource-theoretic contexts, where monotones inherit data-processing and convexity properties from relative entropy.

\subsection{Outlook and Open Questions}\label{sec:outlook}

The results in this work suggest a few concrete lines of progress. We group them into three different cathegorises:

\medskip
\noindent\emph{--Choosing the partition and pairing.}
The lower bound in Theorem~\ref{thm:dataProcess1} depends on the POVM that generates the ensemble partitions. A first goal is to characterize an optimal $G^\star(\rho,\sigma;\mathcal M,\mathcal N)$ or to identify families that admit closed forms or efficiently checkable optimality conditions (projective measurements, square-root/Pretty Good Measurement, Naimark-dilated projectors). In the semiclassical variant (Corollary~\ref{coro:difBasis}), the value depends on how eigenprojectors of $\rho$ are paired with those of $\sigma$; the natural problem is to choose a permutation $\pi$ that minimizes
$\sum_j p_j\,D(\mathcal M(\Pi_j)\Vert \mathcal N(\tilde\Pi_{\pi(j)}))$.

\medskip
\noindent\emph{--Sharpening the structure of the bounds.}
Two aspects deserve attention. First, in the twisted-recovery inequality (\cref{thm:generalEntropyIneq},~\cref{coro:2channelDPI}), replacing the basis-tied $D_{\Pi}$ by the measured entropy $\mathbb{D}_M$ would remove what we believe to be a technical artifact of the proof. 
This likely requires a different route (e.g., a variational representation of relative entropy) that preserves the two-reference recovery structure. Second, in \cref{coro:chainRule1} it is natural to ask for necessary and sufficient conditions under which the averaged rotated-Petz map $\overline{\mathcal R}_{\mc{N}(\sigma),\sigma,\mathcal M}$ is trace-non-increasing on a task-relevant cone (for instance, on $\mathrm{range}(\mathcal N)$). It would also be useful to identify regimes of tightness for~\cref{thm:dataProcess1} (commuting inputs, classical-quantum or degradable channels), and to quantify the gap to the many-copy chain rule based on $D^{\mathrm{reg}}(\mathcal M\Vert\mathcal N)$.

\medskip
\noindent\emph{--Extensions.}
A natural next step is to move beyond Umegaki relative entropy and treat Petz- and sandwiched Renyi divergences, as well as hypothesis-testing and smooth min/max divergences, to delineate precisely which divergences admit single-letter chain inequalities (without regularization) and under what support or parameter conditions. In parallel, extending the results to infinite-dimensional settings under energy constraints (e.g., bounded mean energy with respect to a fixed Hamiltonian) would make them applicable to bosonic and continuous-variable systems.

\section*{Acknowledgements}

The authors thank Andreas Winter, Marco Fanizza, John Calsamiglia and Elisabet Roda for helpful discussions.  They also thank the anonimous referees for their helpful comments. 

The authors are supported by the Spanish MICIN 
(project PID2022-141283NB-I00) with the support of FEDER funds, 
by the Spanish MICIN with funding from European Union NextGenerationEU 
(project PRTR-C17.I1) and the Generalitat de Catalunya, and by the Spanish MTDFP 
through the QUANTUM ENIA project: Quantum Spain, funded by the European 
Union NextGenerationEU within the framework of the ``Digital Spain 
2026 Agenda''. 
MHR also thanks the Alexander von Humboldt Foundation and furthermore acknowledges support by the Generalitat de Catalunya through a FI-AGAUR scholarship.
GG also acknowledges financial support by the Stiftung Innovation in der Hochschullehre and the Spanish MICIN (project
PID2022-141283NB-I00) with the support of FEDER funds.

\printbibliography
.
\appendix

\section{Alternative proof of \cref{thm:dataProcess1}}
\label{app:altProof}

In this section we provide an alternative proof of \cref{thm:dataProcess1}. Unlike the main-text proof, which works directly with the c-q lift and the data processing inequality, this proof relies on the classical expressions of the chain rule. It has an extra regularisation step that the proof presented in the main text circumvents. After the regularization we find the elements in the relative entropies separate additvely, which removes the regularisation from the final expression. We note that this proof uses Uhlmann's inequality and an asymptotic equipartition argument in place of the direct partial-trace DPI argument from the main text, highlighting that we do not need DPI for the result. For completion, we also show a simple proof of the classical chain rule using Jensen's Theorem \cite{jensen06,durrett19}.

    
\begin{prop}[Chain rule of the relative entropy]\label{prop:classIneq}
    Let $p, q$ be probability distributions and $M,N$ stochastic maps. Then \begin{align}
        D(p\|q)-D(Mp\|Nq)\geq-\mathbb{E}_p\cwich{D(M\delta_j\|N\delta_j)},
    \end{align} Where $\delta_j$ is the delta probability distribution at point $j$.
\end{prop}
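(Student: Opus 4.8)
The plan is to prove the classical chain rule by a direct calculation that expands every relative entropy into its defining sum and then applies Jensen's inequality to the logarithm. Write $p_j = p(j)$, $(Mp)_i = \sum_j M_{ij} p_j$, and similarly for $q$, $N$. The point distribution $\delta_j$ pushes forward to the $j$-th column of the stochastic matrix, i.e.\ $(M\delta_j)_i = M_{ij}$, so $D(M\delta_j \| N\delta_j) = \sum_i M_{ij} \log\frac{M_{ij}}{N_{ij}}$. The quantity to control is
\begin{align}
D(p\|q) - D(Mp\|Nq) + \mathbb{E}_p\cwich{D(M\delta_j\|N\delta_j)},
\end{align}
and the goal is to show it is non-negative.

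First I would substitute the definitions and collect terms. The term $D(p\|q) + \mathbb{E}_p D(M\delta_j\|N\delta_j)$ equals $\sum_j p_j \log\frac{p_j}{q_j} + \sum_j p_j \sum_i M_{ij}\log\frac{M_{ij}}{N_{ij}}$, which, using $\sum_i M_{ij} = 1$, can be rewritten as $\sum_{i,j} p_j M_{ij} \log\frac{p_j M_{ij}}{q_j N_{ij}}$. Introduce the joint distributions $P(i,j) = p_j M_{ij}$ and $Q(i,j) = q_j N_{ij}$ on $\mc X \times \mc X$; then this whole expression is exactly $D(P\|Q)$. On the other hand $D(Mp\|Nq) = \sum_i (Mp)_i \log\frac{(Mp)_i}{(Nq)_i}$, and since $(Mp)_i = \sum_j P(i,j)$ and $(Nq)_i = \sum_j Q(i,j)$ are the marginals of $P$ and $Q$ on the first coordinate, this is $D(P_I\|Q_I)$ where the subscript $I$ denotes that marginal. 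So the claim reduces to the monotonicity of KL divergence under marginalization, $D(P\|Q) \geq D(P_I\|Q_I)$, equivalently $D(P\|Q) - D(P_I\|Q_I) \geq 0$.

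The remaining step is to prove this monotonicity directly via Jensen, which is the heart of the argument. Write $D(P\|Q) - D(P_I\|Q_I) = \sum_{i,j} P(i,j)\log\frac{P(i,j)/P_I(i)}{Q(i,j)/Q_I(i)} = \sum_i P_I(i) \sum_j P(j|i)\log\frac{P(j|i)}{Q(j|i)}$ in terms of conditionals, which exhibits it as an average of conditional relative entropies $\sum_i P_I(i)\, D(P(\cdot|i)\|Q(\cdot|i)) \geq 0$ by non-negativity of each term. Alternatively, to stay purely at the level of Jensen as advertised: fix $i$ and apply Jensen to the convex function $t \mapsto t\log t$ (or $-\log$) against the weights $Q(j|i)$, bounding $\sum_j Q(j|i)\,\frac{P(j|i)}{Q(j|i)}\log\frac{P(j|i)}{Q(j|i)} \geq \big(\sum_j P(j|i)\big)\log\big(\sum_j P(j|i)\big) = 0$. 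Summing against $P_I(i)$ gives the bound. The main obstacle is purely bookkeeping: one must be careful with support conditions (terms where $q_j N_{ij} = 0$ force $p_j M_{ij} = 0$ for finiteness, with the usual convention $0\log 0 = 0$) so that the regrouping into $D(P\|Q)$ is legitimate and no $\infty - \infty$ arises; once the joint-distribution reformulation is in place, the inequality is just non-negativity of relative entropy applied conditionally.
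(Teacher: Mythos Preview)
Your proof is correct and takes a genuinely different route from the paper's. You first recognize that $D(p\|q)+\mathbb{E}_p D(M\delta_j\|N\delta_j)=D(P\|Q)$ for the joint distributions $P(i,j)=p_jM_{ij}$, $Q(i,j)=q_jN_{ij}$, and that $D(Mp\|Nq)$ is the divergence of their $i$-marginals; the claim then becomes the data-processing inequality $D(P\|Q)\ge D(P_I\|Q_I)$, which you finish by the conditional decomposition $D(P\|Q)-D(P_I\|Q_I)=\sum_i P_I(i)\,D(P(\cdot\mid i)\|Q(\cdot\mid i))\ge 0$. The paper instead works computationally: it establishes the scalar identity $\mathbb{E}_{P}\!\left[\exp\!\left(-\log\tfrac{P(i,j)}{Q(i,j)}+\log\tfrac{P_I(i)}{Q_I(i)}\right)\right]=1$ and applies Jensen to the convex exponential, then unpacks the resulting expectation term by term. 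Both arguments ultimately rest on Jensen, but yours makes the structure transparent---it exhibits the proposition as an instance of the exact KL chain rule $D(P\|Q)=D(P_I\|Q_I)+\mathbb{E}_{P_I}D(P(\cdot\mid i)\|Q(\cdot\mid i))$ followed by nonnegativity---whereas the paper's exponential-identity route avoids naming the joint distribution and goes straight to the inequality. Your remark on support conventions (handling $0\log 0$ and avoiding $\infty-\infty$) is a point the paper does not address explicitly.
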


\begin{proof}
    Start by considering two classical probabilities on finite dimension sets and two stochastic maps acting on them. Let $p, q$ be probability distributions and $M,N$ stochastic maps. We consider the output probability distributions \begin{align}
    \tilde{p}_i=(Mq)_i=\sum_jM_{ij}p_j,\quad \tilde{q}_i=(Nq)_i=\sum_jN_{ij}q_j.
\end{align} With this definition we can find the following identity. Consider the quantity \begin{align}
    \exp{-\log{\frac{M_{ij}p_j}{N_{ij}q_j}}+\log{\frac{\tilde{p}_i}{\tilde{q}_i}}}.
\end{align} We calculate the average of this quantity over $M_{ij}p_j$: \begin{align}\begin{split}
    \mathbb{E}_{Mp}{\exp{-\log{\frac{M_{ij}p_j}{N_{ij}q_j}}+\log{\frac{\tilde{p}_i}{\tilde{q}_i}}}}&=\sum_{ij}M_{ij}p_j \exp{-\log{\frac{M_{ij}p_j}{N_{ij}q_j}}+\log{\frac{\tilde{p}_i}{\tilde{q}_i}}} \\
    &=\sum_{ij}M_{ij}p_j\frac{\tilde{p}_i}{\tilde{q}_i}\frac{N_{ij}q_j}{M_{ij}p_j}=\sum_{ij}\frac{\tilde{q}_i}{\tilde{q}_i}\tilde{p}_i=1.
\end{split}\end{align} With this we acquire the following identity for these processes \begin{align}\label{eq:NEpartition}
    \mathbb{E}_{Mp}{\exp{-\log{\frac{M_{ij}p_j}{N_{ij}q_j}}+\log{\frac{\tilde{p}_i}{\tilde{q}_i}}}}=1.
\end{align} From this identity and using Jensen's inequality we can obtain a classical entropy inequality.

    From~\cref{eq:NEpartition} we apply Jensen's inequality:\begin{align}\label{eq:jensenClassical}
        1= \mathbb{E}_{Mp}{\exp{-\log{\frac{M_{ij}p_j}{N_{ij}q_j}}+\log{\frac{\tilde{p}_i}{\tilde{q}_i}}}}\geq \exp{\mathbb{E}_{Mp}{-\log{\frac{M_{ij}p_j}{N_{ij}q_j}}+\log{\frac{\tilde{p}_i}{\tilde{q}_i}}}}.
    \end{align} Note that since $M$ is a stochastic map, for a fixed $j$ $M_{ij}$ is a probability distribution on $i$ and therefore $\sum_i M_{ij}=1$. With this we can calculate the result\begin{align}\begin{split}
        0&\geq \mathbb{E}_{Mp}\swich{-\log{\frac{M_{ij}p_j}{N_{ij}q_j}}+\log{\frac{\tilde{p}_i}{\tilde{q}_i}}}
        =\sum_{ij}M_{ij}p_j\swich{-\log{\frac{M_{ij}p_j}{N_{ij}q_j}}+\log{\frac{\tilde{p}_i}{\tilde{q}_i}}} \\
        &=-\sum_{ij}M_{ij}p_j\cwich{\log{\frac{M_{ij}}{N_{ij}}}+\log{\frac{p_j}{q_j}}}+\sum_i\tilde{p}_i\log{\frac{\tilde{p}_i}{\tilde{q}_i}}\\
        &=-\sum_jp_j\swich{\sum_iM_{ij}\log{\frac{M_{ij}}{N_{ij}}}}-\sum_j\swich{\sum_iM_{ij}}p_j\log{\frac{p_j}{q_j}}+D(Mp\|Nq) \\
        &=-\mathbb{E}_pD(M\delta_j\|N\delta_j)-D(p\|q)+D(Mp\|Nq).
    \end{split}\end{align}
\end{proof}

This proof of our main result,~\cref{thm:dataProcess1}, starts from the states $\omega_{\rho}^{\mc{M}}$ and $\omega_{\sigma}^{\mc{N}}$, introduced in \cref{subsec:bayesian_interpretation} and performs a bipartite measurement to obtain classical probability distributions. These distributions can naturally be interpreted as measurements on $\mc{M}(\rho_{i})$  and $\mc{N}(\sigma_{i})$. The result then follows from using the classical chain rule and removing the measurements with Uhlmann's inequality and an extension to many copies. In the end we obtain the fully quantum, single copy equation.

\begin{proof}[Proof of \cref{thm:dataProcess1}]
    We first construct classical states and maps from our quantum objects.  Given a POVM $F=\{F_i\}$, define 
    \begin{align}
        p_j=P_\rho^G(j)=\Tr\cwich{G_j\rho},\quad q_j=P_\sigma^G(j)=\Tr\cwich{G_j\sigma};
    \end{align} 
    and the stochastic matrices
     \begin{align}
        M_{ij}&=\frac{\Tr\cwich{(G_j\otimes F_i)\omega_\rho^\mc{M}}}{\Tr\cwich{G_j\rho}}=\Tr\cwich{F_i\mc{M}(\rho_j)} =P_{\mc{M}(\rho_j)}^F(i),\\
        N_{ij}&=\frac{\Tr\cwich{(G_j\otimes F_i)\omega_\sigma^\mc{N}}}{\Tr\cwich{G_j\sigma}}=\Tr\cwich{F_i\mc{N}(\sigma_j)}=P_{\mc{N}(\sigma_j)}^F(i).
    \end{align} 
    The corresponding output distributions are
     \begin{align}
        \tilde{p_i}=\sum_{j}M_{ij}p_j=\Tr\cwich{F_i\mc{M}(\rho)}=P_{\mc{M}(\rho)}^F(i),\quad \tilde{q_i}=\sum_{j}N_{ij}q_j=\Tr\cwich{F_i\mc{N}(\sigma)}=P_{\mc{N}(\sigma)}^F(i).
    \end{align}

    We plug these objects into~\cref{prop:classIneq}:
     \begin{align}
        D(P_{\mc{M}(\rho)}^F\|P_{\mc{N}(\sigma)}^F)-D(P_\rho^G\|P_\sigma^G)\leq \mathbb{E}_{P_\rho^G} D(P_{\mc{M}(\rho_j)}^F\|P_{\mc{N}(\sigma_j)}^F),
    \end{align} 
    which in Hayashi's~\cite{hayashi99} notation for the relative entropy for is classical distributions obtained from quantum measurements is 
    \begin{align}\label{eq:has}
        D_F(\mc{M}(\rho)\|\mc{N}(\sigma))-D_G(\rho\|\sigma)\leq \mathbb{E}_{P_\rho^G}D_F(\mc{M}(\rho_j)\|\mc{N}(\sigma_j)).
    \end{align} 

    As a measurement is a particular case of a CPTP map, we can use Uhlmann's ineqaulity \cite{uhlmann77}, the monotonicity of the relative entropy, to remove two of the measurements and obtain the relative entropies between the premeasurement states. Note that the DPI under measurements is a weaker result then the general DPI~\cite{hayashi06,mullerLennert13}, we further comment on this in \cref{rem:DPIproof}
 \begin{align}\label{eq:noNyet}
        D_F(\mc{M}(\rho)\|\mc{N}(\sigma))-D(\rho\|\sigma)\leq \mathbb{E}_{P_\rho^G} D(\mc{M}(\rho_j)\|\mc{N}(\sigma_j)).
    \end{align}

To remove the remaining measurement from the equation, consider $n$ independent copies of the system.
That is, apply (with a small abuse of notation)~\cref{eq:noNyet} to the states $\rho^{\otimes n}$, $\sigma^{\otimes n}$, measured with $G^{\otimes n}$ and processed through the channels $\mc{M}^{\otimes n}, \mc{N}^{\otimes n}$. 
This gives
\begin{align}\label{eq:nineq}
    \begin{split}
     \frac{1}{n}D_F(\mc{M}(\rho)^{\otimes n}\|\mc{N}(\sigma)^{\otimes n})&-\frac{1}{n}D(\rho^{\otimes n}\|\sigma^{\otimes n})\leq \mathbb{E}_{P_{\rho^{\otimes n}}^{G^{\otimes n}}}\frac{1}{n}D\left(\bigotimes_{k=1}^n\mc{M}(\rho_{j_k})\Bigg\|\bigotimes_{k=1}^n\mc{N}(\sigma_{j_k})\right)\\
     &\leq\frac{1}{n}\mathbb{E}_{P_{\rho^{\otimes n}}^{G^{\otimes n}}}\sum_{k=1}^nD(\mc{M}(\rho_{j_k})\|\mc{N}(\sigma_{j_k})) \\
     &= \mathbb{E}_{P_{\rho}^{G}}D(\mc{M}(\rho_{j})\|\mc{N}(\sigma_{j})).
 \end{split}
\end{align}

By additivity $D(\rho^{\otimes n}\|\sigma^{\otimes n})=nD(\rho\|\sigma)$ and by the asymptotic equipartition property~\cite[Theorem 2.3]{hiai91}, we have
\begin{align}
\frac{1}{n}\sup_{F} D_{F}(\mc{M}(\rho)^{\otimes n}\|\mc{N}(\sigma)^{\otimes n})
\;\to\; D(\mc{M}(\rho)\|\mc{N}(\sigma)) \quad \text{as } n\to\infty.
\end{align}
Taking the limit of~\cref{eq:nineq} gives the result~\cref{eq:dataProcessing}.
\end{proof}

\section{Necessity of the condition in~\texorpdfstring{\cref{coro:chainRule1}}{}}\label{app:counterexamples}

In~\cref{coro:chainRule1} we saw a sufficient condition for a chain rule. In this appendix we show an example that demonstrates that there are cases in which the chain rule is false, showing that there exists a necessary condition for the fulfillement of the chain rule. Consider the regularized version of the chain rule:
\begin{align}
    \label{eq:projBound}
    D(\rho\|\sigma ) - D(\mathcal{M}(\rho)\|\mathcal{N}(\sigma))\ge -\lim_{n\to\infty}\frac{1}{n}\mathbb{E}_{\rho_n}\cwich{D(\mathcal{M}^{\otimes n}(\Pi_{k})\|\mathcal{N}^{\otimes n}(\Pi_{k}))},
\end{align} where $\Pi_k$ are the eigenprojectors of$\rho_n=\rho^{\otimes n}$. This version can be obtained by applying~\cref{coro:chainRule1} to $\rho^{\otimes n}$, $\sigma^{\otimes n}$, $\mc{M}^{\otimes n}$ and $\mc{N}^{\otimes n}$, and then taking the limit $n\to\infty$. Therefore a violation of~\cref{eq:projBound} implies a violation of~\cref{eq:chainRule1}.

We provide a simple class of counterexamples in~\cref{ex:projBoundViolation0} and a generalisation of this class in~\cref{ex:projBoundViolationp}, which shows that the states that violate~\cref{eq:projBound} are not some measure 0 set. 

\begin{example}\label{ex:projBoundViolation0}
We show a simple counterexample to~\cref{eq:projBound}. Let $d=2$, $\rho= \ketbra{0}$, $\sigma = (1-\varepsilon)\ketbra{+}+\varepsilon\ketbra{-}$, $\mc{M}(x)=\Tr(x)\ketbra{-}$ and $\mc{N}=\mc{E}_\sigma$, where $\mc{E}_\sigma$ denotes the pinching map on the basis of $\sigma$ \cite{hayashi99}. Then \begin{align}\begin{split}
D(\rho\|\sigma)
&= -\expval{\swich{\log(1-\varepsilon)\ketbra{+}+\log\varepsilon\ketbra{-}}}{0}=-\frac{1}{2}\swich{\log(1-\varepsilon)+\log\varepsilon} \\
D(\mc{M}(\rho)\|\mc{N}(\sigma))
&=D(\ketbra{-}\|\sigma)
=-\expval{\swich{\log(1-\varepsilon)\ketbra{+}+\log\varepsilon\ketbra{-}}}{-}=-\log\varepsilon
\end{split}\end{align} Therefore $D(\rho\|\sigma)-D(\mc{M}(\rho)\|\mc{N}(\sigma))=\frac{1}{2}\swich{\log\varepsilon-\log (1-\varepsilon)},$ which approaches $-\infty$ when $\varepsilon$ goes to 0. We now need to check that the right hand side of~\cref{eq:projBound} is finite to finish the counterexample. 

Consider the projectors of $\rho_n=\rho^{\otimes n}$. Because $\rho$ is pure, $\rho_n$ will have only 2 eigenvalues\footnote{This example also works with $\rho=p\ketbra{0}+(1-p)\ketbra{1}$, $p\neq\frac{1}{2}$, but the resulting 
 projectors and the subsequent calculation are a bit more complicated, see~\cref{ex:projBoundViolationp}.}, 0 and 1, with associated projectors: \begin{align}
\Pi_1=\ketbra{0}^{\otimes n},\quad \Pi_0=\I-\ketbra{0}^{\otimes n}.
\end{align} We need to apply the pinching $\mc{E}_\sigma^{\otimes n}(x)=\sum_{i_1,\dots, i_n\in\set{+,-}}\ketbra{i_1\dots i_n}x\ketbra{i_1\dots i_n}$ to these projectors. Because $\mc{M}^{\otimes n}(\Pi_k)=\ketbra{-}^{\otimes n}$ is pure and $\rho_n$ has a single nonzero eigenvalue, we only care about the value of $\expval{\Pi_1}{-,\dots,-}$ in \begin{align}\mc{N}^{\otimes n}(\Pi_k)=\mc{E}_\sigma^{\otimes n}(\Pi_k)=\sum_{i_1,\dots, i_n\in\set{+,-}}\expval{\Pi_k}{i_1\dots i_n}\ketbra{i_1\dots i_n},\end{align}
 since the term associated to $\Pi_1$ is the only one that matters and the diagonal element of $\mc{N}^{\otimes n}(\Pi_1)$ associated to $\ketbra{-,\dots,-}$ will be the only one that survives in the calculation.

We can calculate this value: \begin{align}\begin{split}
    \expval{\Pi_1}{-\dots-}&=\expval{\ketbra{0}^{\otimes n}}{-\dots -}=\frac{1}{2^n}.
\end{split}
\end{align}

Therefore $D(\mc{M}^{\otimes n}(\Pi_1)\|\mc{N}^{\otimes n}(\Pi_1))=-\log{\frac{1}{2^n}}=n\log{2}=n$. Finally, the right hand side is \begin{align}
    -\lim_{n\to\infty}\frac{1}{n}\mathbb{E}_{\rho_n}[D(\mathcal{M}^{\otimes n}(\Pi_{k})\|\mathcal{N}^{\otimes n}(\Pi_{k}))]=-\lim_{n\to\infty}\frac{1}{n}n=-1>-\infty.
\end{align} A quick calculation shows that $0<\varepsilon<\frac{1}{5}$ violates~\cref{eq:projBound}.
\end{example}


In the following example we generalise by adding an angle to $\sigma$ and a mixture to $\rho$. The resulting set can be seen in~\cref{fig:projBoundViolation}.

\begin{example}\label{ex:projBoundViolationp}
     Let $\rho = (1-p)\ketbra{0}+p\ketbra{1}$, $p\in(0,\fmig)$; $\sigma = (1-\varepsilon)\ketbra{\pe}+\varepsilon\ketbra{\me}$ with $\ket{\pe}=\cos\frac{\theta}{2}\ket{0}+\sin\frac{\theta}{2}\ket{1}$ and $\ket{\me}=\sin\frac{\theta}{2}\ket{0}-\cos\frac{\theta}{2}\ket{1}$; $\mc{M}(x)=\Tr(x)\ketbra{\me}$ and $\mc{N}=\mc{E}_\sigma$. 

We will follow the same steps as in~\cref{ex:projBoundViolation0}. First we calculate the relative entropies: \begin{align}\begin{split}
 D(\rho\|\sigma)=& -S(p)-(1-p)\expval{\swich{\log(1-\varepsilon)\ketbra{\pe}+\log\varepsilon\ketbra{\me}}}{0} \\
 &-p\expval{\swich{\log(1-\varepsilon)\ketbra{\pe}+\log\varepsilon\ketbra{\me}}}{1}\\
 =&-S(p)-(1-p)\cwich{\cos^2\tht\log\swich{1-\varepsilon}+\sin^2\tht\log\varepsilon}\\&-p\cwich{\sin^2\tht\log\swich{1-\varepsilon}+\cos^2\tht\log\varepsilon}\\
 =&-S(p)-\log\swich{1-\varepsilon}\cwich{(1-p)\cos^2\tht+p\sin^2\tht}\\
 &-\log\varepsilon\cwich{(1-p)\sin^2\tht+p\cos^2\tht}\\
 D(\mc{M}(\rho)\|\mc{N}(\sigma))
 =&D(\ketbra{\me}\|\sigma)
 =-\expval{\swich{\log(1-\varepsilon)\ketbra{\pe}+\log\varepsilon\ketbra{\me}}}{\me}\\
 =&-\log\varepsilon
 \end{split}
    \end{align}  Note that the coefficients of $\log(1-\varepsilon)$, $\log\varepsilon$ in $D(\rho\|\sigma)$ add up to 1. Therefore \begin{align}
        D(\rho\|\sigma)-D(\mc{M}(\rho)\|\mc{N}(\sigma))
        = -S(p)+\cwich{\log{\varepsilon}-\log\swich{1-\varepsilon}}\cwich{(1-p)\cos^2\tht+p\sin^2\tht}.
    \end{align} Similarly to in~\cref{ex:projBoundViolation0} $\cwich{\log{\varepsilon}-\log\swich{1-\varepsilon}}$ can be infinitely negtive for small $\varepsilon$ and $\cwich{(1-p)\cos^2\tht+p\sin^2\tht}$ is a strictly positive coefficient, therefore the left hand side of~\cref{eq:projBound} goes to $-\infty$.

$\rho_n$ will now have $n+1$ eigenspaces, each of dimension $\binom{n}{k}$, for $k\in\set{0,\dots,n}$ with eigenvalues $(1-p)^kp^{n-k}$. Let $\mathbb{X}_k=\set{x\in\set{0,1}^n \text{ s.t. }\abs{x}=k},$ where $\abs{x}$ is the number of ones a sequence has. The $k$th projector will be $\Pi_k=\sum_{x\in\mathbb{X}_k}\ketbra{x}$. Similarly to~\cref{ex:projBoundViolation0}, $\mc{N}^{\otimes n}(x)=\mc{E}_\sigma^{\otimes n}(x)=\sum_{i_1,\dots, i_n\in\set{\pe,\me}}\ketbra{i_1\dots i_n}x\ketbra{i_1\dots i_n}$ and we are only concerned with the coefficients associated to $\ketbra{\me}^{\otimes n}$. This coefficients are \begin{align}\begin{split}
   \bra{\me}^{\otimes n}\Pi_k \ket{\me}^{\otimes n} 
    =&\bra{\me}^{\otimes n}\sum_{x\in\mathbb{X}_k}\ketbra{x}\ket{\me}^{\otimes n}
    = \sin^{2k}\tht\cos^{2(n-k)}\tht\abs{\mathbb{X}_k}\\
    =&\binom{n}{k}\sin^{2k}\tht\cos^{2(n-k)}\tht.
\end{split} \end{align} 

The eigenvalue associated to the $k$th eigenspace is $(1-p)^kp^{n-k}$. Due to $\Tr\cwich{\Pi_k}=\binom{n}{k}$, a coefficient $\binom{n}{k}$ appears outside the relative entropy, since $D(k\rho\|k\sigma)=kD(\rho\|\sigma)$. Therefore the expectation value in the right hand side is \begin{align}\label{eq:expvalGeneral}
\begin{split}
-\mathbb{E}_{\rho_n}\cwich{D(\mathcal{M}^{\otimes n}(\Pi_{k})\|\mathcal{N}^{\otimes n}(\Pi_{k}))}
=&\sum_{k=0}^n\binom{n}{k}(1-p)^kp^{n-k}\log\swich{\sin^{2k}\tht\cos^{2(n-k)}\tht}.
\end{split}
\end{align} The limit can be resolved exactly. We can consider first the case $p=0$. While the discussion on the projectors is false for $p=0$ because all there is a single eigenspace and eigenvalue for $k<n$, because this eigenvalue is 0 we can still use~\cref{eq:expvalGeneral}. Only the term $k=n$ survives and it simplifies to $\log\swich{\sin^{2n}\tht}$. The limit is then $\log\swich{\sin^{2}\tht}$ We can find the values of $\varepsilon$ and $\theta$ that violate~\cref{eq:projBound}by setting the left hand side to be smaller than the right hand side. \begin{align}\begin{split}
    \cwich{\log{\varepsilon}-\log\swich{1-\varepsilon}}\cos^2\tht<\log\swich{\sin^{2}\tht} \Leftrightarrow \varepsilon < \frac{(\sin^2\tht)^{\frac{1}{\cos^2\tht}}}{1+(\sin^2\tht)^{\frac{1}{\cos^2\tht}}}
\end{split}
\end{align} These states are plotted in~\cref{fig:projBoundViolation}.

\begin{figure}[ht]
    \centering
      \includegraphics[width=0.49\textwidth]{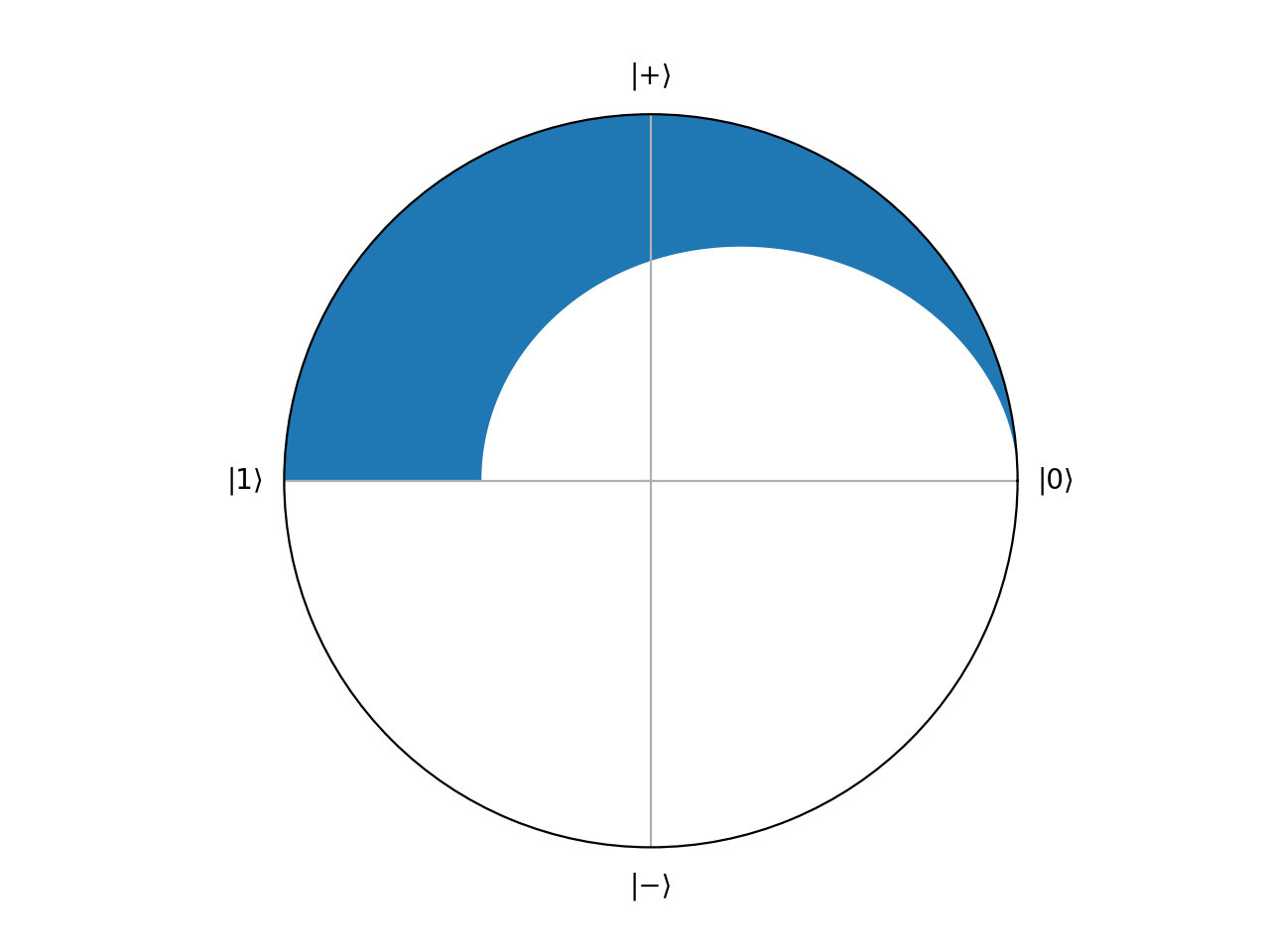}
      \includegraphics[width=0.49\textwidth]{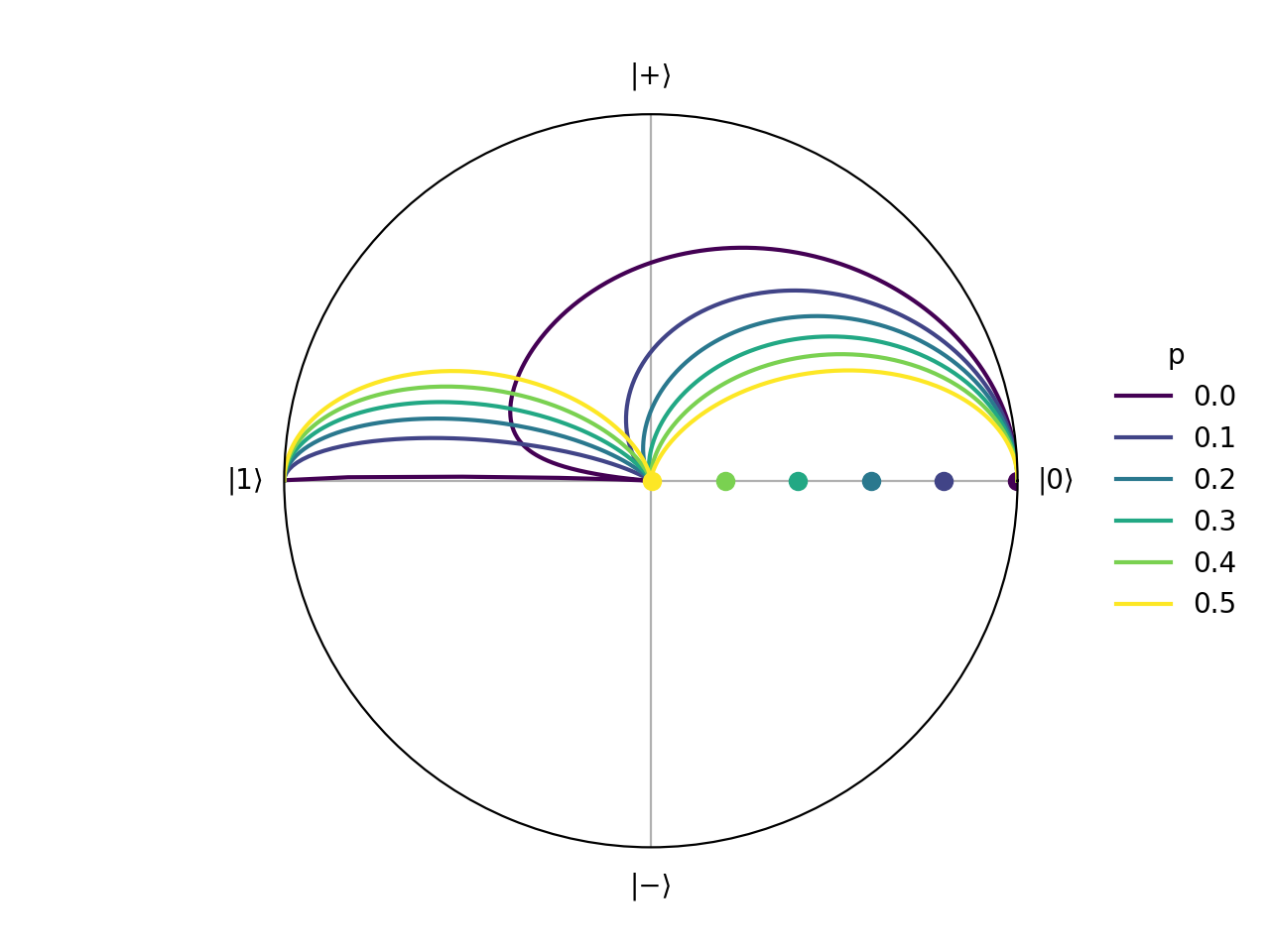}
      \caption{Cross section of the $x-z$ plane of the Bloch sphere. Left shows the states $\sigma$ for which~\cref{eq:projBound} is violated when $\rho=\ketbra{0}$. All the boundaries are not in the set. Right shows the bound for mixed values of $p$. The dots represent the $\rho$ for each value of $p$. Note that in the right picture the cases $p=0$ and $p=\fmig$ are calculated for values very close to $0$ and $\fmig$ but not exaclty $0$ and $\fmig$, since the bound is not continuous at $\fmig$ and at $0$ it has a very rapid change.} 
      \label{fig:projBoundViolation}
    \end{figure}

    If we let $p\in(0,\fmig)$ we can lower bound the right hand side. Note that  \begin{align} \begin{split}
        &\sum_{k=0}^n\binom{n}{k}(1-p)^kp^{n-k}\log\swich{\sin^{2k}\tht\cos^{2(n-k)}\tht} \\
        &=\sum_{k=0}^n\binom{n}{k}(1-p)^kp^{n-k}\swich{k\log\swich{\sin^{2n}\tht}+(n-k)\log{\swich{\cos^{2n}\tht}}} \\
        & =n(1-p)\log\swich{\sin^{2}\tht}+np\log\swich{\cos^{2}\tht} 
        = \log\swich{\sin^{2n(1-p)}\tht\cos^{2np}\tht}
    \end{split}
    \end{align} In the limit this will be $\log\swich{\sin^{2(1-p)}\tht\cos^{2p}\tht}$.~\cref{eq:projBound} will be violated if the left hand side is smaller: \begin{align}\begin{split}
        -S(p)+\cwich{\log{\varepsilon}-\log\swich{1-\varepsilon}}&\cwich{(1-p)\cos^2\tht+p\sin^2\tht}<\log\swich{\sin^{2(1-p)}\tht\cos^{2p}\tht} \\
        &\Updownarrow \\
        \varepsilon &< \frac{\exp{\frac{(1-p)\log\swich{\sin^{2}\tht}+p\log\swich{\cos^{2}\tht}+S(p)}{\cwich{(1-p)\cos^2\tht+p\sin^2\tht}}}}{1+\exp{\frac{(1-p)\log\swich{\sin^{2}\tht}+p\log\swich{\cos^{2}\tht}+S(p)}{\cwich{(1-p)\cos^2\tht+p\sin^2\tht}}}}.
    \end{split}
    \end{align}~\cref{fig:projBoundViolation} shows the bound for different values of $p$. In the limit where $p\to\fmig$ the bound simplifies to \begin{align}
        \varepsilon< \frac{\sin^2\theta}{1+\sin^2\theta}.
    \end{align}
\end{example}

\end{document}